\documentclass{article}

\usepackage{orcidlink}
\usepackage{authblk}
\usepackage{natbib}

\usepackage{moreverb,url}

\usepackage{graphicx} 
\usepackage{amsmath,amsthm,amssymb} 
\usepackage{xspace} 
\usepackage{cleveref}
\usepackage{algorithm2e}
\usepackage{tikz}
\usetikzlibrary{matrix,calc}
\usepackage{mathabx} 
\usepackage{enumitem}
\usepackage{booktabs}
\usepackage{multirow}
\usepackage{longtable}
\usepackage[a4paper, left=2.5cm, right=2.5cm]{geometry}

\Crefname{algocf}{Algorithm}{Algorithms}
\RestyleAlgo{ruled}
\SetKwInOut{Input}{input}
\SetKwInOut{Output}{output}
\newtheorem{definition}{Definition}
\newtheorem{problem}{Problem}
\newtheorem{observation}{Observation}
\crefname{observation}{observation}{observations}
\Crefname{observation}{Observation}{Observations}
\crefname{problem}{problem}{problems}
\Crefname{problem}{Problem}{Problems}

\newtheorem{theorem}{Theorem}
\newtheorem{lemma}{Lemma}

\newcommand\msaij[2]{\mathsf{MSA}[#1,#2]}
\newcommand\gap{\mathtt{-}}

\DeclareMathOperator{\maxheight}{H}

\DeclareMathOperator{\segmentheight}{H}

\DeclareMathOperator{\spell}{spell}
\DeclareMathOperator{\stringsize}{m}
\DeclareMathOperator{\size}{N}
\DeclareMathOperator{\gapsize}{N_\varepsilon}
\newcommand{\trivialsegmentationvertical}{S^{\,\rotatebox[origin=c]{90}{$\equiv$}}}
\newcommand{\trivialsegmentationhorizontal}{S^{\equiv}}
\newcommand{\minUcard}{$\mathsf{min}$-$U$-$\mathsf{cardinality}$}
\newcommand{\minLsize}{$\mathsf{min}$-$L$-$\mathsf{size}$}

\newcommand{\preclex}{\mathbin{\prec_{\mathrm{lex}}}}
\newcommand{\preceqcolex}{\mathbin{\preceq_{\mathrm{colex}}}}
\newcommand{\preccolex}{\mathbin{\prec_{\mathrm{colex}}}}
\usepackage{soul,xcolor}

\begin{document}

\title{Pangenome Optimization via Elastic Degenerate Strings}

\author[1]{Nicola Rizzo}
\author[2]{Sebastian Visan-Draghicescu}
\author[3]{Nadia Pisanti}
\author[4]{Veli M{\"a}kinen}

\affil[1]{Department of Computer Science, University of Helsinki, Finland, \texttt{nicola.rizzo@helsinki.fi} \orcidlink{0000-0002-2035-6309}}
\affil[2]{Department of Computer Science, University of Helsinki, Finland, \texttt{Sebastian.Visan-Draghicescu@helsinki.fi}}
\affil[3]{Department of Computer Science, University of Pisa, Italy, \protect\\
\texttt{nadia.pisanti@unipi.it} \orcidlink{0000-0003-3915-7665}}
\affil[4]{Department of Computer Science, University of Helsinki, Finland, \texttt{veli.makinen@helsinki.fi} \orcidlink{0000-0003-4454-1493}}

\date{}

\maketitle

\begin{abstract}
An \emph{Elastic Degenerate String} (EDS, or ED-string) is a sequence of string sets. A pangenome, consisting of variations observed in a population along the genome sequences, can be naturally encoded as an EDS. Pattern matching and comparison problems on pangenome representations such as EDSes have been widely studied in the literature, but optimizing the pangenome properties during its construction has been largely omitted. We fill this gap by showing how methods originally developed for the related problem of founder reconstruction can be adapted to minimize, in linear time, the total cardinality of the EDS sets or the total size of the EDS strings, given suitable multiple alignments representing the input data. We provide an implementation for the minimum-cardinality criterion in a tool \texttt{mincard}, and conduct the first experiments on scalable pangenome optimization via EDSes. The code and experiments are available at \url{https://github.com/algbio/eds}.
\end{abstract}

\section{Introduction}

An \emph{Elastic Degenerate String} (EDS, or ED-string) is a sequence of string sets representing all possible ordered combinations. Given its highly recombinant nature, an EDS can be used to represent the variations observed in the genomes of a population, making it a possible \emph{pangenome} representation~\citep{Maretal16,DBLP:journals/nc/BaaijensBBVPRS22}. Matching and comparison problems on EDSes have gained attention as intriguing combinatorial problems (see Section~\ref{sec:related}), but also as practical abstractions to enhance read mapping workflows with pangenomic information \citep{BOO23,DBLP:journals/bioinformatics/0001CM25}. The algorithmic results and running times of tools that deploy EDSes depend on the properties of EDSes, but a rigorous study of how to optimize these properties has been missing.
We fill this gap by showing how techniques originally developed for related problems on founder reconstruction can be adapted to minimize, in linear time, two fundamental EDS properties, as defined in the following formulations through a segmentation framework. 

Given a \emph{multiple sequence alignment (MSA)} representing the input, defining an EDS can basically boil down to partitioning the MSA into the segments that represent the sets of the EDS:  In an MSA, the $r$ input sequences are made to be of equal length $c$ by adding gaps `\texttt{-}' forming a matrix of $r$ rows and $c$ columns which we will denote $\msaij{1..r}{1..c}$. Computing an optimal partition of an interval $[1..c]$, that is, a \emph{segmentation}, is thus a problem that arises naturally when dealing with the transformation of sequences that are aligned into $c$ positions, according to 
the constraints and an optimality measure imposed on the solution. Figure~\ref{fig:example} gives an example of such segmentation on an MSA of $6$ sequences. 
While MSA construction is NP-hard \citep{Mai78}, practical heuristics have been developed to obtain high quality MSAs \citep{Notredame07}, and recent efforts have concentrated into scaling MSA construction to human-chromosome scale~\citep{olbrich2025generating}.
In this paper, we assume an MSA to be given as input, and we look for rigorous segment-based ways to convert the MSA into a sequence of string sets, that is, into an EDS.
As can be seen in Figure~\ref{fig:example}, an MSA segmentation directly induces an EDS by interpreting the strings of each segment to be part of a set.

\begin{figure}[htp]
    \centering
    \includegraphics{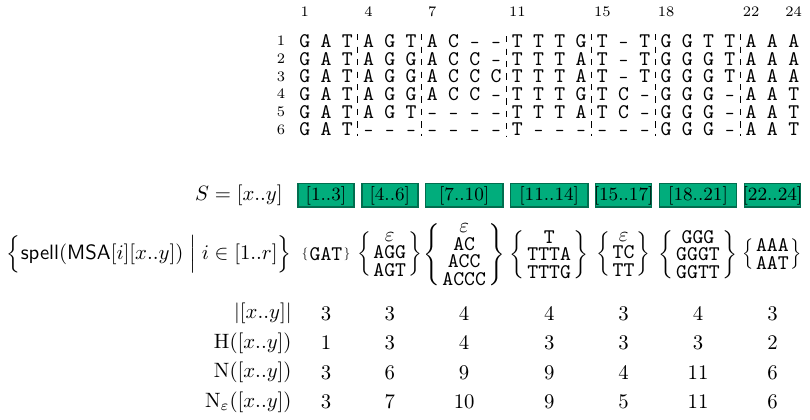}
\caption{Example of an $\msaij{1..6}{1..24}$ and a segmentation $S$ containing $k = 7$ segments. The resulting EDS (which we denote with the same symbols $S$ as the segmentation) is the sequence of sets shown in curly brackets.
The height $\maxheight(S)$ of $S$ is the maximum size of its sets, and hence $\maxheight(S)=4$; the cardinality $\stringsize(S)$ is 19 being the total number of elements in the sets; the size (the total number of letters ignoring the gap $\varepsilon$) $\size(S)$ is 48, and the gap-aware size $\gapsize(S)$ is 51.}\label{fig:example}
\end{figure}

The transformation from an MSA to an EDS through segmentation can be seen as a lossy compression: by collapsing equal strings inside each segment, we ignore the exact genomic paths. This can be a desired feature, for example, for hiding the sensitive full input sequence by publishing only the EDS (see \cite{Blindenbach2026.02.16.706152} for a recent privacy approach for pangenome graphs). A more rudimentary reason for such conversion is that EDSes and related pangenome graphs do compress well and enable fast processing and analysis of variation in populations \citep{Maretal16}.
A notable feature of this scheme is that if a query string occurs in a row of the MSA, then it will also occur in the EDS, but not the other way around; therefore, an EDS can create artifacts with respect to the original MSA. Various ways to optimize the transformation to minimize the amount of artifacts while not sacrificing privacy could be considered. Moreover, minimizing a specific property is useful when the downstream analyses of the resulting EDS employ algorithms whose complexity depends from that property. In this paper, we specifically target \emph{cardinality} and \emph{size}. The latter is the total number of characters and appears in the time and space complexity of basically any algorithm that has been designed to analyze an ED string. The former one is the total number of sequences: whenever the algorithmic task involves two EDSes (e.g.\ in comparison and distance computation algorithms), non-trivial methods avoid the quadratic size cost and rather depend on the cardinality  (see for example \cite{Ascetal24}).

We initiate the study on how to efficiently construct an EDS by looking at the selected optimization criteria mentioned above and defined in~\Cref{sec:defs}. For these parameters we give
linear-time algorithms (\Cref{sec:algs}) to minimize the total cardinality of the EDS sets (\emph{cardinality}) and the total size of the strings in the EDS sets (\emph{size}) under specific assumptions: the algorithms directly work for gapless MSAs, and they serve as heuristics when gaps are interpreted as part of the alphabet.
In the latter case, the optimization criteria should be adjusted accordingly: see \Cref{sec:experiments}  for further discussion and experiments on general MSAs. We start by reviewing the related work in \Cref{sec:related}.
A preliminary and partial version of this paper appeared as part of a Festschrift \citep{DBLP:conf/birthday/0001MP25}. 

\section{Related work\label{sec:related}}
We present the main results on the segmentations and partitions of aligned sequences.
\begin{description}
    \item[Founder reconstruction.]   
    \cite{Ukkonen02founder} studied the problem of explaining a given set of $r$ aligned sequences representing $c$ haplotype sites (with no insertions or deletions) with the recombination of a few \emph{founder sequences}, under the simple \emph{crossover} model of recombination (sequence switching at any of the $c$ aligned positions); minimizing the founder set is trivial if the recombination can happen at any position, but if the recombinations are allowed only at restricted positions to be chosen, then the resulting segmentation problems---minimizing the maximum cardinality of a segment (the number of founders in \citep{Ukkonen02founder}) given a lower bound on the segment length, or given a target average segment length size to surpass---can be solved in polynomial time.

    \cite{DBLP:conf/wabi/NorriCKM18,Norri19scalablefounder} solved one of the problems by Ukkonen, the segmentation of aligned sequences minimizing the maximum height given a segment-length lower bound $L$, in linear $O(rc)$ time and $O(r + L)$ space.
    The solution proceeds in a dynamic programming fashion, from left to right, and for each aligned position $y$ it uses the \emph{positional Burrows--Wheeler Transform}~\citep{Dur14}, enhanced by a few other arrays, to efficiently compute all the possible heights of a segment ending at $y$ and their optimal recursive values.

    \cite{Cazaux19columnstream} studied the related problem of maximizing the minimum segment length or the average segment length of any segmentation that has height of at most a given $H$, motivated by minimizing the size of the resulting pangenomic index based on founder sequences.
    To do so, they remark that the approach by \cite{DBLP:conf/wabi/NorriCKM18,Norri19scalablefounder} fits into a more general left-to-right \emph{column-stream} model of computation, and use \emph{range maximum queries} to efficiently compute the recursive values.
    For this last task, they modify existing techniques for the semi-dynamic setting where an array is indexed for amortized- and constant-time range queries in an online fashion.

    \item[Algorithms on EDSes.] 
    There is a wide literature on exact \citep{GrossiILPPRRVV17,AoyamaNIIBT18,DBLP:conf/wea/PissisR18,sopang,DBLP:conf/icalp/0001GPPR19,IliopoulosKP21,Beretal22,Ascetal24} and approximate \citep{Beretal17,Beretal20,latin22,Beretal24} pattern search on EDSes, on their alignment with linear strings \citep{MP22,DBLP:conf/biostec/MwanikiGP23,ieee2025}, and on the pairwise comparison of (elastic) degenerate strings \citep{DBLP:conf/wabi/AlzamelA0GIPPR18,Alzetal20,DBLP:conf/cpm/GaboryMPPRSZ23,front24,GMPPRSZ25} where algorithmic results exhibit complexity that depend on the size of EDSes.
    Moreover, the latest algorithms for 
    EDS-based pangenome comparison (intersection, matching statistics, similarity and distance measures) also depend on the cardinality of EDSes~\citep{DBLP:conf/cpm/GaboryMPPRSZ23,front24,GMPPRSZ25}. 
    In practical EDS applications, experimental heuristics are used to build EDSs from called variants (VCF files) \citep{sopang,DBLP:journals/bioinformatics/DanecekAAABDHLMSMD11,BOO23,GMPPRSZ25} and from MSAs \citep{GrossiILPPRRVV17,front24}, giving further motivation for our focus on optimizing these measures.

    \item[Indexable Founder Graphs.] \emph{Elastic Founder Graphs} (EFGs), introduced by  \cite{DBLP:conf/isaac/EquiNACTM21,Equi23foundergraphs}, take the simplified recombination model of founder sequences (recombination at selected positions only) to transform an MSA with insertions and deletions into an acyclic Elastic Block Graph: the segmentation results into consecutive blocks of nodes.
    In the general case, the resulting graph is still hard to index just like more general graphs \citep{DBLP:conf/icalp/EquiGMT19,DBLP:conf/sofsem/EquiMT21,EMTG23,EMT23}, so Equi et al.\ proved that if each chosen segment spells strings that exclusively occur from the segments' starting column, then this property is conserved in the resulting graph, which is easy to index. This results in a framework of linearithmic-time (i.e.\ $O(n \log n)$) construction algorithms for indexable EFGs, that support fast pattern matching \citep{DBLP:journals/bioinformatics/NorriCDVM21}. 
    
    \cite{Rizzo24efg} studied the problem of 
    minimizing the maximum segment length or maximum height of a segmentation resulting in an indexable EFG both in the gapless setting and in the setting with gaps~\citep{DBLP:conf/wabi/MakinenCENT20,DBLP:conf/iwoca/RizzoM22,Equi23foundergraphs,Rizzo24efg}, also improving the linearithmic construction time to linear.
    In developing an EFG-based sequence-to-graph pipeline for aligning reads based on the seed-chain-extend strategy, \cite{DBLP:journals/bioinformatics/0001CM25} relaxed EFGs into EDSes to efficiently chain fragments minimizing a metric based on edit distance, to find good approximate alignments to extend.

    \item[MSA covers.] While segmentation induces a limited class of acyclic pangenome graph representations, a more general approach to cover MSAs and generate arbitrary acyclic graphs was proposed by \cite{CartesBCVD24}; the related optimization problems are much harder, although small instances can be solved using integer linear programming.
\end{description}

\section{Preliminaries\label{sec:defs}}

We denote the integer interval $\lbrace x, x+1, \dots, y \rbrace$ as $[x..y]$, with $x,y \in \mathbb{Z}$.
Let $\msaij{1..r}{1..c} \in (\Sigma \cup \lbrace \gap \rbrace)^{r \times c}$ be a \emph{multiple sequence alignment} of $r$ rows and $c$ columns representing the input sequences (or \emph{strings}) and the aligned positions, respectively.
It is built from a finite integer alphabet $\Sigma = [1..\sigma]$ ($\Sigma = \lbrace \mathtt{A}, \mathtt{C}, \mathtt{G}, \mathtt{T} \rbrace$ in our examples, but it can be easily mapped to integers $[1..4]$) augmented with the \emph{gap symbol} $\gap \notin \Sigma$ to represent insertions and deletions.
We indicate the $i$-th row as $\msaij{i}{1..c} \in (\Sigma \cup \lbrace \gap \rbrace)^{c}$, and its $j$-th symbol as $\msaij{i}{j} \in \lbrace \Sigma \cup \gap \rbrace$.
Assuming to work in the standard RAM model of computation, $\sigma$ fits into a word and the alignment takes at most $O(rc)$ words of space, or $O(rc \log |\Sigma|)$ bits of space.
We say that $\msaij{1..r}{1..c}$ is \emph{gapless} if no gap symbol is used.
The operator $\spell(T)$ removes all the gap symbols from a given string $T \in (\Sigma \cup \lbrace \gap \rbrace)^c$ (e.g.\ $\spell(\mathtt{AGC}\gap\mathtt{{TT}-}) = \mathtt{AGCTT}$); if $T$ contains only gaps then $\spell(T) = \varepsilon$, the empty string.
Consider the following definition alongside \Cref{fig:example}.

\begin{definition}[Segmentation metrics]
    Given $\msaij{1..r}{1..c} \in (\Sigma \cup \lbrace \gap \rbrace)^{r \times c}$, a \emph{segmentation} of the MSA is a partition $S = S_1, S_2, \dots, S_k$ of $[1..c]$, that is, a sequence of $k$ contiguous and non-overlapping intervals covering $[1..c]$.
    In symbols,
\begin{gather*}
    S_1 = [x_1..y_1], \; \dots, \; S_k = [x_k..y_k]
    \\\enspace\text{with}\enspace
    x_1 = 1, \;
    y_{i} + 1 = x_{i+1} \; \forall i \in [1..k-1], \;\text{and}\;
    y_k = c.
\end{gather*}
    We denote (see Figure~\ref{fig:example} for examples):
\begin{itemize}[nosep] 
    \item the \emph{number of segments} of $S$ as $\lvert S \rvert = k$;
    \item the \emph{height} $\max_{i \in [1..k]} \segmentheight(S_i)$ of $S$ as $\maxheight(S)$, where the height $\segmentheight(S_i)$ of a segment $S_i = [x..y]$ is defined as
    \[
        \segmentheight([x..y]) =
        \Big\lvert \Big\lbrace
            \spell(\msaij{i}{x..y})
            \;\Big|\;
            i \in [1..r]
        \Big\rbrace \Big\rvert
    \]
    (note that the empty string is counted by $\segmentheight$);
    \item the \emph{cardinality} $\sum_{i \in [1..k]} \segmentheight(S_i)$ as $\stringsize(S)$;
    \item the \emph{size} $\sum_{i \in [1..k]} \size(S_i)$ as $\size(S)$, where the size $\size(S_i)$ of a segment $S_i = [x..y]$ is defined as
    \begin{gather*}
        \size([x..y]) =
        \sum_{T \in \spell(\msaij{1..r}{x..y})} \lvert T \rvert
        \qquad\text{with}\qquad \\
        \spell(\msaij{1..r}{x..y}) = \Big\lbrace \spell(\msaij{i}{x..y}) \quad\Big|\quad i \in [1..r] \Big\rbrace
    \end{gather*}
    (note that the empty string $\varepsilon$ contributes $0$ units to $\size([x..y])$); and
    \item the \emph{gap-aware size} $\gapsize(S)$ as above, substituting $\size([x..y])$ with
\[
    \gapsize([x..y]) = \begin{cases}
        \size([x..y]) + 1 &\text{if $\varepsilon \in \spell(\msaij{1..r}{x..y})$,} \\
        \size([x..y])     &\text{otherwise}.
    \end{cases}
\]
\end{itemize}
\end{definition}

Consider the problem of finding a segmentation of minimum cardinality.
If the MSA contains few but very long sequences, that is, $r \ll c$, the trivial segmentation $\trivialsegmentationhorizontal = [1..c]$ might be optimal.
We can encourage a certain level of recombination in the corresponding EDS by setting an upper bound $U$ on the length of the accepted segments.
\begin{problem}[min-$U$-cardinality]
    \label{prob:minUcard}
    Given $\msaij{1..r}{1..c}$ and an upper bound $U$ on the segment length, find a segmentation $S$ containing segments of length at most $U$ and minimizing the cardinality $\stringsize(S)$.
\end{problem}

\noindent On the other hand, consider minimizing the size.
In \Cref{subsec:trivial} we show that the trivial segmentation $\trivialsegmentationvertical = \lbrace 1 \rbrace, \dots, \lbrace c \rbrace$ has always minimum size if the MSA is gapless, resulting in a highly recombinant EDS.
We can symmetrically discourage recombination in regions of low sequence similarity by fixing a lower bound $L$ on the length of the accepted segments.
\begin{problem}[min-$L$-size]
    \label{prob:minLsize}
    Given $\msaij{1..r}{1..c}$ and a lower bound $L$ on the segment length, find a segmentation $S$ containing segments of length at least $L$ and minimizing the gap-aware size $\gapsize(S)$.
\end{problem}

The linear-time solutions we develop in \Cref{sec:algs} rely on efficient data structures for indexing aligned sequences and range minimum queries.
The first problem is solved by the positional Burrows--Wheeler transform, which
is based on iteratively sorting the MSA rows while reading them column by column.
Let $\Sigma$ have an implicit total order $\le$ on its characters.
Given strings $S,T \in \Sigma^c$, we say that $S$ is \emph{lexicographically smaller} than $T$, in symbols $S \preclex T$, if $S[1..x] = T[1..x]$ and $S[x+1] < T[x+1]$ for some $x \in [0..c-1]$ (where $S[1..0]$ is equal to the empty string $\varepsilon$).
Symmetrically, $S$ is \emph{co-lexicographically smaller} than $T$, in symbols $S \preccolex T$, if $S[x..c] = T[x..c]$ and $S[x-1] < T[x-1]$ for some $x \in [1..c+1]$ (where $S[c+1..c] = \varepsilon$).

\begin{definition}[Positional Burrows--Wheeler transform~\citep{Dur14}]
    \label{def:pBWT}
    The \emph{positional Burrows--Wheeler transform} (pBWT) of a \emph{gapless} $\msaij{1..r}{1..c}$ for position $x \in [1..c]$ consists of array $\mathtt{a}_x[1..r]$, representing the prefixes $\msaij{i}{1..x}$ for $i \in [1..r]$ sorted co-lexicographically, and array $\mathtt{d}_x[1..r]$, describing the length of the longest common suffixes of adjacent prefixes in the sorted order.
    More specifically:
\begin{itemize}[nosep]
    \item $\mathtt{a}_x[1..r]$ is the permutation\footnote{This permutation is unique if we assume each input sequence $\msaij{i}{1..c}$ to start with a unique character $\mathtt{s}_i$ marking its start, with $i \in [1..r]$.} of $[1..r]$ such that
    \[
        \msaij{\mathtt{a}_{x}[1]}{1..x}
        \preceqcolex
        \msaij{\mathtt{a}_{x}[2]}{1..x}
        \preceqcolex
        \dots
        \preceqcolex
        \msaij{\mathtt{a}_{x}[r]}{1..x};
    \]
    \item $\mathtt{d}_x[i]$ is an integer in $[1..x+1]$ such that the longest common suffix between $\msaij{\mathtt{a}_{x}[i]}{1..x}$ and $\msaij{\mathtt{a}_{x}[i-1]}{1..x}$ has length $\mathtt{d}_x[i]$, if $i > 1$ and such suffix is non-empty, otherwise $\mathtt{d}_x[i] = x + 1$.
\end{itemize}
\end{definition}

As shown by \cite{Dur14}, the pBWT allows an efficient incremental computation of arrays $\mathtt{a}$ and $\mathtt{d}$:

\begin{lemma}[\citep{Dur14,MakinenN19}]
    Let $\msaij{1..r}{1..c}$ be a gapless MSA over alphabet $\Sigma$ of size $O(r)$.
    Given the positional Burrows--Wheeler transform for position $x \in [1..{c-1}]$, that is, arrays $\mathtt{a}_x[1..r]$ and $\mathtt{d}_x[1..r]$, the arrays $\mathtt{a}_{x+1}[1..r]$ and $\mathtt{d}_{x+1}[1..r]$ can be computed in $O(r)$ time.
\end{lemma}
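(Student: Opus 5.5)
We obtain $\mathtt{a}_{x+1}$ from $\mathtt{a}_x$ by a stable counting sort keyed on the new column. The key observation is that co-lexicographic order on prefixes $\msaij{i}{1..x+1}$ compares the last character $\msaij{i}{x+1}$ first and breaks ties by the co-lexicographic order of $\msaij{i}{1..x}$. So we scan $\mathtt{a}_x[1..r]$ from left to right and distribute each row index $\mathtt{a}_x[j]$ into a bucket indexed by $\msaij{\mathtt{a}_x[j]}{x+1}$, preserving the order within each bucket. Concatenating the buckets in increasing character order gives $\mathtt{a}_{x+1}$. Since $\lvert\Sigma\rvert = O(r)$, the bucket array has $O(r)$ entries, so initializing and concatenating the buckets costs $O(r)$ time, as does the scan. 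With the unique start characters $\mathtt{s}_i$ from the footnote to \Cref{def:pBWT}, the resulting permutation is exactly the unique one required.

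For $\mathtt{d}_{x+1}$ we use the following fact. Take two rows $p$ and $q$ that become adjacent in $\mathtt{a}_{x+1}$ with the same character $\alpha$ at column $x+1$, and let $p$ come first. Their longest common suffix has length $1 +$ the longest common suffix of $\msaij{p}{1..x}$ and $\msaij{q}{1..x}$. Because $p$ and $q$ lie in the same bucket, $p$ precedes $q$ in $\mathtt{a}_x$, say at positions $j_p < j_q$. By the standard LCP/range-minimum property of sorted string lists, that inner common suffix length equals $\min_{j_p < t \le j_q} \mathtt{d}_x[t]$, with the convention that the value $x+1$ denotes an empty common suffix.

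This reduces the computation to a running minimum. During the same left-to-right scan, we keep for each character $\alpha$ a running minimum $m_\alpha$ of the $\mathtt{d}_x$ values seen since the last row was placed into bucket $\alpha$. When we reach position $t$, we first fold $\mathtt{d}_x[t]$ into $m_\alpha$ for every $\alpha$, and then append $\mathtt{a}_x[t]$ to its bucket $\beta$ with value derived from $m_\beta$, after which we reset $m_\beta$. Done naively, this update touches every bucket at every step and costs $O(r\lvert\Sigma\rvert)$.

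Removing that cost is the main obstacle. The first thing to try is Durbin's trick for the alphabet-aware version \citep{Dur14,MakinenN19}. Index the suffix lengths by start position: store for each row the starting column of the shared suffix, so that a minimum of lengths becomes a maximum of start positions. Then keep a single global running value together with, for each bucket, the time of its last insertion, and compute the required minimum lazily only when that bucket is next used. One lazy scheme is to store per bucket the index of its last insertion and answer $\min \mathtt{d}_x[j_p+1..j_q]$ by a range minimum query on $\mathtt{d}_x$ after an $O(r)$ preprocessing. Since all queries are offline, known in the scan order, and there are at most $r$ of them, they can also be answered in total $O(r)$ time with an offline stack-based method, with no general RMQ structure needed. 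Finally, the first element of each bucket, and any row whose predecessor in $\mathtt{a}_{x+1}$ has a different character at column $x+1$, gets the value $x+2$, meaning an empty common suffix. The whole update therefore takes $O(r)$ time.
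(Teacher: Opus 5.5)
Your proof is correct and matches the standard argument behind this result, which the paper states without proof and attributes to \citep{Dur14,MakinenN19}. A stable counting sort on column $x+1$ gives $\mathtt{a}_{x+1}$, and each entry of $\mathtt{d}_{x+1}$ for two rows with the same character at column $x+1$ is a range query on $\mathtt{d}_x$ between consecutive occurrences of that character in the scan of $\mathtt{a}_x$, answered in $O(1)$ after $O(r)$ preprocessing (for instance by appending $\mathtt{d}_x[1..r]$ to the structure of \Cref{lem:RMQ}).

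Three small fixes. First, the paper encodes an empty suffix as $x+1$, so its entries behave like Durbin's start positions: you need a range \emph{maximum}, or you must map $x+1$ to $0$ before taking a minimum of lengths. Second, your aside that the queries can be answered by a purely stack-based offline method is not justified as stated: a monotone stack alone needs a binary search per query, and linear total time needs interval union--find. Third, the single global running value cannot serve buckets whose windows differ. Rely on the RMQ argument instead of these two asides.
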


\noindent Finally, in \Cref{subsec:cardinality,subsec:size} we index an integer array to find out the minimum value in a given range of positions of the array with the following result from Cazaux et al.
\begin{lemma}[{Semi-dynamic range minimum query data structure for arrays~\citep[Lemma 7]{Cazaux19columnstream}}]
    \label{lem:RMQ}
    There exists a data structure that maintains an integer array $\mathtt{I}[1..c]$ and supports: the \emph{append} query, which adds a new element to the end of $\mathtt{I}$ and increments $c$ by one, in $O(1)$ amortized time; and the \emph{Range Minimum Query} which, for any given $x,y \in [1..c]$ with $x \le y$, computes a position $k \in [1..c]$ such that $\mathtt{I}[k] = \min \lbrace Q[\ell] : x \le \ell \le y \rbrace$, in $O(1)$ time.
\end{lemma}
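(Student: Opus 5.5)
The plan is to give the standard two-level decomposition, with the first level adapted so that every append touches only the right end of the structure. Work in the word RAM with word size $w \ge \log c$, and fix the block size $b = w$. Because $b$ does not depend on $c$, growing the array never forces a rebuild. Store $\mathtt{I}$ in a dynamic array with doubling, so growing it costs $O(1)$ amortized per append. Blocks are $B_t = [(t-1)b+1 .. tb]$, and the last block may be only partially filled.

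\textbf{In-block level.} For every position $k$, store a $b$-bit mask $M_k$ whose set bits are the positions $j \le k$ in the block of $k$ that are \emph{suffix minima} of that block up to $k$. A suffix minimum is a $j$ with $\mathtt{I}[j] < \mathtt{I}[j']$ for all $j < j' \le k$ in the block. Maintain a stack of these positions for the current block.
\begin{itemize}[nosep]
\item \emph{Append.} When $\mathtt{I}[k]$ is appended, pop every stack position with value $\ge \mathtt{I}[k]$ and push $k$. Then set $M_k$ to $M_{k-1}$ with the popped bits cleared and bit $k$ set. This is the Cartesian-tree/stack argument: each position is pushed and popped once, so the cost is $O(1)$ amortized. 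At a block boundary the stack is reset.
\item \emph{Query inside one block.} For $x \le y$ in the same block, the minimum of $\mathtt{I}[x..y]$ is the leftmost suffix minimum of $M_y$ that is at or after $x$. Mask off the bits below $x$ and take the lowest set bit. This is $O(1)$ using standard word-RAM tricks (a precomputed table or a multiplication-based lowest-bit routine).
\end{itemize}
The same query with $x$ set to the block start gives block prefix minima. Using $M$ of the block's last position gives block suffix minima once the block is complete.

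\textbf{Inter-block level.} When block $B_t$ completes, append its minimum position $m_t$ to an array of block minima. Index that array with a sparse table stored \emph{right-aligned}: $T[\ell][t]$ is the position of the minimum over blocks $[t-2^\ell+1 .. t]$. Completing block $t$ fills the column $T[\cdot][t]$ by the recurrence $T[\ell][t] = \arg\min(T[\ell-1][t], T[\ell-1][t-2^{\ell-1}])$. That is $O(\log c) = O(w) = O(b)$ work, charged to the $b$ appends of the block, so it is $O(1)$ amortized. The table has $O((c/b)\log c) = O(c)$ entries.

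\textbf{Queries spanning blocks.} If $x$ and $y$ lie in different blocks $t_x < t_y$, take the minimum of three candidates:
\begin{itemize}[nosep]
\item the suffix minimum of $B_{t_x}$ from $x$;
\item the prefix minimum of $B_{t_y}$ up to $y$ (valid even if $B_{t_y}$ is the partial last block);
\item if $t_y - t_x \ge 2$, the minimum over blocks $t_x+1..t_y-1$, obtained from two overlapping sparse-table lookups of length $2^{\lfloor\log(t_y-t_x-1)\rfloor}$.
\end{itemize}
Each candidate takes $O(1)$ time.

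The main obstacle is to reconcile the amortized $O(1)$ append with a block size that must be $\Theta(\log c)$ while $c$ grows online. Fixing $b = w$ removes this difficulty, since $w \ge \log c$ throughout. The remaining care points are that the right-aligned sparse table never has to update old columns, and that the partial last block is queried only through its in-block masks.
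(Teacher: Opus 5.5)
The paper itself does not prove this statement. It imports it as Lemma~7 of Cazaux et al., so there is no in-paper argument to compare against, and your proposal has to stand on its own. It does: your construction is a correct, self-contained proof, and it does not depend on details of the cited source.

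The in-block claim is correct, although you assert it without justification. Take any $p \in [x..y]$. The rightmost minimum of $\mathtt{I}[p..y]$ is a suffix minimum at or after $x$, hence at or after the leftmost such position $j^*$. So $\mathtt{I}[j^*] \le \mathtt{I}[p]$.

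The stack update keeps each $M_k$ in one word with amortized $O(1)$ work. Two design choices make appends cheap:
\begin{itemize}
\item The fixed block size $b = w$ never changes as $c$ grows. A block size of $\Theta(\log c)$ would otherwise force rebuilding whenever $c$ doubles.
\item The sparse table is right-aligned, so completing block $t$ writes only column $t$. That costs at most $\lfloor \log t \rfloor + 1 \le w + 1$ entries, charged to that block's $b$ appends. It also keeps the table at $O((c/w)\log c) = O(c)$ words.
\end{itemize}
Compared with citing, your route gives an explicit argument whose model assumptions are visible. Citing gives brevity and comes paired with the queue version (\Cref{lem:RMQueue}), which the paper needs for its $O(c + L\cdot r)$ space bound. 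Your absolute-block scheme does not directly give that variant: supporting dequeues in space proportional to the current length requires discarding old blocks and capping the sparse-table levels.

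A few small points to tighten:
\begin{itemize}
\item A precomputed table for the lowest set bit is only $O(1)$ time and $O(c)$ space when $w = O(\log c)$. With $b = w$ and $w$ possibly much larger than $\log c$, you must use the constant-time multiplication-based routine. Isolate the bit with $z \mathbin{\&} (-z)$, then apply a de Bruijn or Fredman--Willard bit-index computation.
\item You use $\lfloor \log(t_y - t_x - 1) \rfloor$ but never say how to compute it. The same primitive works, or an appendable table of floor-logs over block indices.
\item The assumption $w \ge \log c$ must hold for the final length of the array. This is the usual transdichotomous assumption, but state it.
\item The rows (or columns) of the sparse table should also be stored in doubling arrays, so that their growth stays amortized $O(1)$. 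This is routine but should be said.
\end{itemize}
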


\begin{lemma}[{Semi-dynamic range minimum query data structure for queues~\citep[Lemma 8]{Cazaux19columnstream}}]
    \label{lem:RMQueue}
    There exists a data structure that maintains an integer array $\mathtt{Q}[1..c]$ and supports: the \emph{append} query in $O(1)$ amortized time; the \emph{dequeue} query, which removes the element at the beginning of $\mathtt{Q}$ and decreases $c$ by one, in $O(1)$ amortized time; and the \emph{Range Minimum Query} in $O(1)$ time.
\end{lemma}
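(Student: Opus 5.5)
The plan is to reduce \Cref{lem:RMQueue} to the append-only structure of \Cref{lem:RMQ} by working with \emph{global} positions and a moving front pointer, and to control space by periodic rebuilding. We maintain an instance $\mathcal{D}$ of the data structure of \Cref{lem:RMQ} over an array $\mathtt{I}$ that stores, in order, every element appended to the queue since the last rebuild. We also keep an integer offset $o \ge 0$ counting how many elements have been dequeued since that rebuild. The live queue is then $\mathtt{Q}[1..c] = \mathtt{I}[o+1..o+c]$.

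The operations are as follows.
\begin{itemize}[nosep]
    \item An \emph{append} of value $v$ is an append of $v$ to $\mathtt{I}$ in $\mathcal{D}$, costing $O(1)$ amortized time by \Cref{lem:RMQ}.
    \item A \emph{dequeue} simply increments $o$ and decrements $c$, taking $O(1)$ time, since nothing has to be removed physically.
    \item A \emph{Range Minimum Query} on $\mathtt{Q}[x..y]$ is answered by querying $\mathcal{D}$ on $\mathtt{I}[o+x..o+y]$ and subtracting $o$ from the returned position. Correctness is immediate from the index translation, and the query takes $O(1)$ time.
\end{itemize}

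Without further care, this structure would occupy space proportional to the total number of appends rather than to the current length $c$. To fix this, whenever the number $o$ of logically deleted elements exceeds the number $c$ of live ones, we \emph{rebuild}: we create a fresh instance of \Cref{lem:RMQ}, append the $c$ live elements $\mathtt{I}[o+1..o+c]$ to it one by one, discard the old instance, and reset $o=0$. By the amortized bound of \Cref{lem:RMQ}, a rebuild costs $O(c)$ total time. At every moment the stored array then has length at most $2c+O(1)$, so the space stays linear in the queue length.

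The main point to verify is the amortization of the rebuilds, and I would handle it with a standard potential argument. A rebuild is triggered only after $o > c$ dequeues have occurred since the previous rebuild. We charge its $O(c)$ cost to those $o$ dequeues, giving $O(1)$ extra per dequeue. Each element is involved in at most one charged dequeue, so the charges do not overlap. One subtlety should be checked: the amortized append cost of \Cref{lem:RMQ} is stated for a single instance, and we discard instances. This is harmless, because the total work of the appends performed during a rebuild is $O(c)$ (a sequence of $c$ appends on a fresh instance costs $O(c)$ overall), and discarding an instance can only waste prepaid credit, never incur debt. Altogether, append and dequeue run in $O(1)$ amortized time and range minimum queries in $O(1)$ worst-case time, as claimed.
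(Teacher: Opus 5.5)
Your proof is correct. The paper does not prove \Cref{lem:RMQueue} itself. It imports the lemma from Cazaux et al.\ (their Lemma~8), just as it imports \Cref{lem:RMQ}. So there is no in-paper argument to match, and what you give is a self-contained derivation of the queue version using only the append-only interface of \Cref{lem:RMQ}. Its three ingredients are lazy dequeues through the offset $o$, index translation for queries, and global rebuilding once the dead prefix outnumbers the live elements.

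The amortization is sound as you set it up. By the aggregate meaning of the bound in \Cref{lem:RMQ}, each instance, from creation until it is discarded, costs $O(a+q)$, where $a$ and $q$ are the appends and queries it receives. The $c_0$ appends that populate an instance at a rebuild are paid for by the more than $c_0$ dequeues since the previous rebuild. Hence the total time is linear in the number of operations, and queries remain $O(1)$ in the worst case.

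Two remarks on what your version buys.
\begin{itemize}
\item For the literal statement, which mentions no space bound, the offset trick alone already suffices and the rebuilding is unnecessary. The rebuilding is what gives the property the paper actually uses in \Cref{theo:minsize}: each of the $r$ queues holding only the last $2L$ columns occupies $O(L)$ space.
\item That space claim also relies on the structure of \Cref{lem:RMQ} taking space linear in the length of its array. The paper's statement of that lemma leaves this implicit, so you should state it explicitly as an assumption.
\end{itemize}
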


\section{Solutions for gapless MSAs}\label{sec:algs}
\label{sec:solutions}
In this section, we first show basic properties of gapless MSAs and that in the setting without gaps, \textsf{min-1-size} (i.e.\ no segment lower bound) is trivial whereas \textsf{min-$c$-cardinality} (i.e.\ no segment upper bound) is not (\Cref{subsec:trivial}).
Then, we develop $O(rc)$-time algorithms for \textsf{min-$U$-cardinality} (\Cref{subsec:cardinality}) and \textsf{min-$L$-size} (\Cref{subsec:size}) for gapless MSAs.

\subsection{Basic properties and non-trivial settings}
\label{subsec:trivial}
Before investigating under which setting problems \textsf{min-$U$-cardinality} and \textsf{min-$L$-size} are trivial, consider the following properties of MSAs without gaps that we will exploit in this section.
\begin{observation}[Monotonicity of left extensions~\citep{Norri19scalablefounder}]\label{obs:leftmonotonicity}
Given $\msaij{1..r}{1..c}$, for any $1 \le x \le y \le c$ we say that $[x..y]$ is a \emph{left extension} of suffix $\msaij{1..r}{y+1..c}$.
If the MSA is \emph{gapless}, the following monotonicity property holds for any left extension $[x..y]$:
\[
    r \ge \segmentheight([x'\!..y]) \ge \segmentheight([x..y]) \qquad \forall x' < x.
\]
The \emph{monotonicity of right extensions}, defined symmetrically as $\segmentheight([x..y]) \le \segmentheight([x..y']) \le r$ for any segment $[x..y]$ and $y' \in [y+1..c]$, also holds in the gapless case.
\end{observation}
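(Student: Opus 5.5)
The plan is to build a surjection between the distinct strings of the two segments. Note first that in a gapless MSA, $\spell$ is the identity, so $\segmentheight([x..y])$ is just the number of distinct rows of the submatrix $\msaij{1..r}{x..y}$. The upper bound $r$ is immediate in both statements, since the set $\{\spell(\msaij{i}{x..y}) \mid i \in [1..r]\}$ is indexed by $r$ rows.

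For the left extensions, fix $x' < x \le y$. Consider the map $\pi$ that sends each string of length $y-x'+1$ to its suffix of length $y-x+1$. Since the MSA is gapless, every row satisfies
\[
\pi(\msaij{i}{x'..y}) = \msaij{i}{x..y}.
\]
Hence $\pi$ restricted to $\{\msaij{i}{x'..y} \mid i\in[1..r]\}$ maps onto $\{\msaij{i}{x..y} \mid i\in[1..r]\}$. A surjection from a finite set cannot map onto a larger set, which gives $\segmentheight([x'..y]) \ge \segmentheight([x..y])$.

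Equivalently, two rows that agree on $[x'..y]$ also agree on $[x..y]$. The partition of rows into equality classes on the longer interval therefore refines the partition on the shorter interval, and refinement can only increase the number of classes.

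The right-extension statement follows by the symmetric argument: use the prefix projection of length $y-x+1$ from strings on $[x..y']$.

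The only delicate point, rather than a real obstacle, is where gaplessness enters. With gaps, $\spell$ of a longer window need not determine $\spell$ of a sub-window, because positions are no longer aligned after gap removal. For example, rows $\mathtt{A}\gap$ and $\gap\mathtt{A}$ both spell $\mathtt{A}$ on $[1..2]$ but differ on $[2..2]$. So the explicit use of the identity $\spell(T)=T$ in the projection step is the key justification that needs to be stated.
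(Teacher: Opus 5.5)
Your proof is correct. The suffix (resp.\ prefix) projection, or equivalently the fact that rows equal on a longer window stay equal on every sub-window, is the standard justification here; the paper does not prove this itself, but states it as an observation credited to Norri et al., relying on $\spell$ being the identity in the gapless case, which is exactly the point you isolate. Your \texttt{A-}/\texttt{-A} counterexample for the gapped case mirrors the example the paper itself uses later when discussing gaps.
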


\begin{observation}
    \label{obs:gaplesssize}
    Given a \emph{gapless} $\msaij{1..r}{1..c}$, the (gap-aware) size of any segment $[x..y]$ is equal to the height times the length of the segment, in symbols $\gapsize([x..y]) = \size([x..y]) = \lvert x..y \rvert \cdot \segmentheight([x..y])$.
    Indeed, in this setting $\spell(\msaij{i}{x..y}) = \msaij{i}{x..y}$ for all $i \in [1..r]$ and all strings spelled by segment $[x..y]$ have length $y - x + 1$.
\end{observation}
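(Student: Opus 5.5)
The statement is \Cref{obs:gaplesssize}: for a gapless MSA, $\gapsize([x..y]) = \size([x..y]) = \lvert x..y\rvert \cdot \segmentheight([x..y])$. The plan is to unfold the definitions of $\spell$, $\size$, $\gapsize$ and $\segmentheight$ for a fixed segment $[x..y]$, and to use gaplessness twice: once to remove the effect of $\spell$, and once to rule out the empty string.

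\textbf{Step 1: $\spell$ is the identity.} Since no entry $\msaij{i}{j}$ equals $\gap$, the operator $\spell$ removes nothing. Hence $\spell(\msaij{i}{x..y}) = \msaij{i}{x..y}$ for every row $i \in [1..r]$. This string has length exactly $y-x+1 = \lvert x..y\rvert$.

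\textbf{Step 2: $\gapsize = \size$.} Because $x \le y$, each spelled string has length at least $1$. So $\varepsilon \notin \spell(\msaij{1..r}{x..y})$, and the second case in the definition of $\gapsize$ applies, giving $\gapsize([x..y]) = \size([x..y])$.

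\textbf{Step 3: compute $\size$.} By definition, $\size([x..y])$ is a sum over the \emph{set} $\spell(\msaij{1..r}{x..y})$, so duplicate rows are counted once. By Step 1, every element of this set has length $\lvert x..y\rvert$. Therefore
\[
\size([x..y]) = \lvert x..y\rvert \cdot \bigl\lvert \spell(\msaij{1..r}{x..y}) \bigr\rvert .
\]

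\textbf{Step 4: identify the set cardinality with the height.} The cardinality $\bigl\lvert \spell(\msaij{1..r}{x..y}) \bigr\rvert$ is exactly $\segmentheight([x..y])$ by the definition of height. Substituting into Step 3 gives the claim.

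The only point needing care is in Step 3: the sum must range over distinct strings, which matches the set semantics used in both $\size$ and $\segmentheight$. Once that is noted, no step is a real obstacle.
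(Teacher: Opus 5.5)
Your proof is correct and takes the same route as the paper. The paper justifies the observation in one line: $\spell$ acts as the identity on a gapless MSA, so every spelled string has length $y-x+1$. You make explicit two points the paper leaves implicit: $\varepsilon$ cannot occur because $x \le y$ (so $\gapsize([x..y]) = \size([x..y])$), and $\size$ and $\segmentheight$ both range over the same set of distinct spelled strings.
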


The simplest version of \textsf{min-$L$-size} occurs when the MSA contains no gap symbol and there is no lower bound on the segment length, that is, when $L = 1$.
Intuitively, this setting presents the same trivial solution as in the founder reconstruction problem [see \citep[Theorem 1]{Ukkonen02founder}], since transforming each column into its own segment achieves the best possible compression.

\begin{theorem}\label{theo:trivialsize}
    Let $[x..y]$, $[y+1..z]$ be adjacent segments of a \emph{gapless} $\msaij{1..r}{1..c}$. Then the size $\size([x..z])$ of their union is bigger or equal to their cumulative size $\size([x..y]) + \size([y+1..z])$,
    and an optimal solution to \textsf{min-$1$-size} for a \emph{gapless} $\msaij{1..r}{1..c}$ is the trivial segmentation $S^{\,\rotatebox[origin=c]{90}{$\equiv$}} = \lbrace 1 \rbrace, \lbrace 2 \rbrace, \dots, \lbrace c \rbrace$.
\end{theorem}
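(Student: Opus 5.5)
By \Cref{obs:gaplesssize}, in a gapless MSA the size of a segment is its length times its height. The plan is therefore to compare heights, using the monotonicity of extensions from \Cref{obs:leftmonotonicity}, and then conclude optimality of $\trivialsegmentationvertical$ by repeatedly splitting segments.

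\textbf{Step 1: rewrite the sizes.} Write $\ell_1 = y-x+1$ and $\ell_2 = z-y$. By \Cref{obs:gaplesssize},
\[
\size([x..z]) = (\ell_1+\ell_2)\,\segmentheight([x..z]),
\]
while
\[
\size([x..y]) + \size([y+1..z]) = \ell_1\,\segmentheight([x..y]) + \ell_2\,\segmentheight([y+1..z]).
\]

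\textbf{Step 2: compare heights.} By the monotonicity of right extensions in \Cref{obs:leftmonotonicity}, $\segmentheight([x..y]) \le \segmentheight([x..z])$. By the monotonicity of left extensions applied to $[y+1..z]$, $\segmentheight([y+1..z]) \le \segmentheight([x..z])$. Multiplying the first inequality by $\ell_1$, the second by $\ell_2$, and summing gives the first claim.

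Should the observation not be taken as proven, the height inequalities are direct in the gapless case. The map sending the distinct row strings of $[x..z]$ to their restrictions on $[x..y]$ is surjective onto the set of distinct strings of $[x..y]$, since each such string comes from some row. Hence $\segmentheight([x..y]) \le \segmentheight([x..z])$, and the same argument applies to $[y+1..z]$.

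\textbf{Step 3: optimality of the trivial segmentation.} Let $S$ be any segmentation; with $L=1$ every segmentation is feasible. Any segment $[x..y]$ with $x<y$ can be split into $[x..x]$ and $[x+1..y]$. By Step 2 this does not increase $\size(S)$, because the other segments are unchanged and $\size$ is additive over segments. Each split strictly increases the number of segments, so after finitely many splits we reach $\trivialsegmentationvertical$. Hence $\size(\trivialsegmentationvertical) \le \size(S)$.

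In the gapless setting $\gapsize = \size$, since there are no empty strings. So $\trivialsegmentationvertical$ is optimal for \textsf{min-$1$-size}.

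No step here is a serious obstacle. The one point needing care is that heights count distinct strings, and that the projection argument relies on every string having fixed length, which is where gaplessness is used.
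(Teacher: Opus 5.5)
Your proposal is correct and follows the paper's own route. You write each size as length times height (Observation~\ref{obs:gaplesssize}), bound both $\segmentheight([x..y])$ and $\segmentheight([y+1..z])$ by $\segmentheight([x..z])$ via monotonicity of extensions (Observation~\ref{obs:leftmonotonicity}), and sum; your repeated splitting and the remark that $\gapsize = \size$ without gaps just spell out the step the paper states in one line.
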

\begin{proof}
    Consider integers $1 \le x \le y < z \le c$ and their corresponding segments $[x..y]$, $[y+1..z]$, $[x..z]$.
    Then splitting $[x..z]$ into $[x..y]$ and $[y+1..z]$ never increases the size, since the MSA is gapless:
\begin{align*}
    \size([x..y]) + \size([y+1..z]) &= \lvert [x..y] \rvert \cdot \segmentheight([x..y]) + \lvert [y+1..z] \rvert \cdot \segmentheight([y+1..z]) &\text{(\Cref{obs:gaplesssize})}\\
    &\le \lvert [x..y] \rvert \cdot \segmentheight([x..z]) + \lvert [y+1..z] \rvert \cdot \segmentheight([x..z]) &\text{(\Cref{obs:leftmonotonicity})}\\
    &= \lvert [x..z] \rvert \cdot  \segmentheight([x..z])\\ &= \size([x..z]) &\text{(\Cref{obs:gaplesssize})}
\end{align*}
    Thus, the size of $S^{\,\rotatebox[origin=c]{90}{$\equiv$}}$ is less than or equal to the size of any other segmentation.\hfill\qedsymbol
\end{proof}

However, the trivial solution is not valid if $L > 1$ or it can be suboptimal if the MSA contains gaps: in the former setting, picking mostly segments of length exactly $L$ can be suboptimal, and in the latter case each empty string $\varepsilon$ contributes $1$ to the total size of the segmentation so the parts of the MSA with long runs of gaps need to be carefully considered.
See \Cref{fig:nontrivialminsize}.

\begin{figure}[htp]
    \centering
    \includegraphics{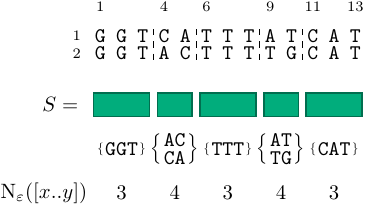}%
    \hfill%
    \includegraphics{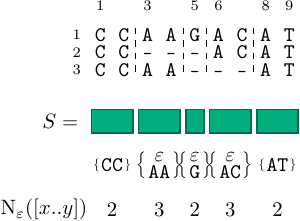}
    \caption{On the left, a gapless MSA where the optimal solution to \textsf{min-$2$-size} (i.e.\ the lower bound on the minimum segment length is $2$) has size 17 and is non-trivial. On the right, an MSA with gaps for which the optimal solution to \textsf{min-$1$-size} is non-trivial and has gap-aware size 12.}
    \label{fig:nontrivialminsize}
\end{figure}

Regarding the problem of minimizing the cardinality of the resulting segmentation, \textsf{min-$U$-card} behaves differently than other segmentation problems, as it does not admit a trivial optimal segmentation even if there is no upper bound on the segment length (i.e.\ $U = c$) and the MSA does not have any gap symbol.
\begin{observation}
    Trivial segmentations $S^{\,\rotatebox[origin=c]{90}{$\equiv$}} = \lbrace 1 \rbrace, \lbrace 2 \rbrace, \dots, \lbrace c \rbrace$ and $S^{\equiv} = [1..c]$ are not optimal with respect to \textsf{min-$c$-card} for some \emph{gapless} $\msaij{1..r}{1..c}$: on one hand, in regions of high similarity longer segments can be preferable as they can spell very few strings; on the other hand, shorter segments can generate less strings in regions of high diversity.
    See the example $\msaij{1..8}{1..4}$ in \Cref{fig:nontrivialmincard}.
\end{observation}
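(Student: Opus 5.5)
The statement to establish is the final observation: neither trivial segmentation is always optimal for \textsf{min-$c$-card} on gapless MSAs. Since this is an existence claim, I will build one explicit gapless MSA and exhibit a segmentation whose cardinality is strictly smaller than both $\stringsize(S^{\,\rotatebox[origin=c]{90}{$\equiv$}})$ and $\stringsize(S^{\equiv})$. Exhibiting a better segmentation is enough; no optimality certificate is needed.

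The two failure modes pull in opposite directions, so I will glue them together in a single MSA with $r=8$ rows and $c=4$ columns.

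\begin{itemize}
\item \textbf{Conserved part.} Columns $1..2$ are highly similar: every row reads $\mathtt{AA}$, except that a few rows differ in one column. Cutting here into singletons pays roughly one string per column, while keeping $[1..2]$ whole pays only a few strings.
\item \textbf{Diverse part.} Columns $3..4$ use all pairs over $\lbrace \mathtt{A},\mathtt{C}\rbrace^2$, each repeated twice. Then $\segmentheight([3..4])=4$, but $\segmentheight(\lbrace 3\rbrace)=\segmentheight(\lbrace 4\rbrace)=2$, so splitting gives $2+2=4$. To make splitting strictly better, I will use three symbols per column, arranged so that the $8$ rows realise $8$ distinct pairs. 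Then splitting costs $3+3=6<8$.
\item \textbf{Whole MSA.} The combined rows should all be distinct, so $\segmentheight([1..4])=8$.
\end{itemize}

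Concretely, I will take columns $1..2$ with rows $\mathtt{AA}$ (six times), $\mathtt{CA}$, $\mathtt{AC}$. This gives $\segmentheight([1..2])=3$, while $\segmentheight(\lbrace1\rbrace)+\segmentheight(\lbrace2\rbrace)=4$. For columns $3..4$ I will use $8$ distinct pairs over $\lbrace\mathtt{A},\mathtt{C},\mathtt{G}\rbrace$. The candidate segmentation $[1..2],\lbrace3\rbrace,\lbrace4\rbrace$ then has cardinality $3+3+3=9$. The two trivial segmentations have cardinality $4+3+3=10$ and $8$ respectively.

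This first attempt fails: the whole segment $S^{\equiv}$ has cardinality $8$, which beats $9$. The construction must therefore make $\segmentheight([1..c])$ large relative to the pieces. The fix is to let the conserved prefix and the diverse suffix be independent, so that heights multiply under concatenation while cardinalities only add. I will set the number of distinct strings in each block to be about $\sqrt r$ and pair the blocks as a product.

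With $r=8$, take columns $1..2$ spelling $2$ distinct strings that are distinct in both columns (e.g.\ $\mathtt{AA}$, $\mathtt{CC}$). Take columns $3..4$ spelling $4$ distinct strings whose single columns each have height $2$. The rows then realise all $2\times 4=8$ combinations, so $\segmentheight([1..4])=8$. The cardinalities are:
\begin{itemize}
\item $S^{\equiv}$: $8$;
\item $S^{\,\rotatebox[origin=c]{90}{$\equiv$}}$: $2+2+2+2=8$;
\item $[1..2],\lbrace3\rbrace,\lbrace4\rbrace$: $2+2+2=6$.
\end{itemize}
So $6<8$ beats both trivial segmentations, which proves the claim.

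The only real obstacle is the balancing act in the construction: $S^{\equiv}$ must be punished through the multiplicative blow-up of heights, and $S^{\,\rotatebox[origin=c]{90}{$\equiv$}}$ through a block whose columns are individually varied but jointly low-height. The product construction above achieves both. The remaining checks are direct height counts from the definition.
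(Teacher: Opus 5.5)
Your proposal is correct and takes essentially the paper's approach: exhibit one explicit gapless $\msaij{1..8}{1..4}$ on which the segmentation $[1..2],\lbrace 3\rbrace,\lbrace 4\rbrace$ strictly beats the trivial segmentations. The paper's \Cref{fig:nontrivialmincard} reports cardinality $7$ against $\stringsize(\trivialsegmentationhorizontal)=8$, while your product instance, with the eight rows $uv$ for $u\in\lbrace\mathtt{AA},\mathtt{CC}\rbrace$ and $v\in\lbrace\mathtt{A},\mathtt{C}\rbrace^2$, gives $6$ against $8$ for both $\trivialsegmentationhorizontal$ and $\trivialsegmentationvertical$, and all your height counts check out.
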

\begin{figure}[htp]
    \centering
    \includegraphics{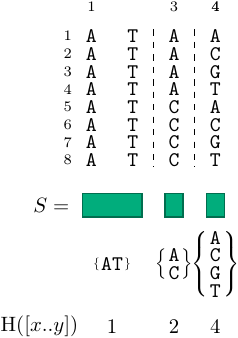}
    \caption{Example of a gapless $\msaij{1..8}{1..4}$ for which the trivial segmentation $S^{\equiv} = [1..c] = [1..4]$ is not a solution to \textsf{min-$4$-size}, since its cardinality $\stringsize(S^{\equiv})$ is equal to $8$. Instead, $S = [1..2], \lbrace 3 \rbrace, \lbrace 4 \rbrace$ (shown) is such that $\stringsize(S) = 7$.}
    \label{fig:nontrivialmincard}
\end{figure}

\subsection{Minimizing the cardinality}
\label{subsec:cardinality}
Given $\msaij{1..r}{1..c}$ and $U \in [1..c]$, for any $y \in [1..c]$ we define $\stringsize_y$ as the size of an optimal solution of \textsf{min-$U$-cardinality} on instance $\msaij{1..r}{1..y}$ respecting segment upper bound $U$.
Then, the following recursion holds, since the cardinality of a segmentation $S$ is the sum of the individual cardinalities of its segments:
\begin{align}
    \stringsize_0 &= 0, \nonumber \\
    \stringsize_y &= \min_{x \in [\max(1,\,y-U+1)..y]} \Big( \stringsize_{x-1} + \segmentheight([x..y]) \Big) & y \in [1..c], \label{eq:recstringsize}
\end{align}
and it is easy to see that $\stringsize_c$ is equal to the cardinality $\stringsize(S)$ of an optimal segmentation $S$.

\cite{Norri19scalablefounder} initially observed that, at least for gapless MSAs, all changes in segment height can be described compactly and in a range fashion by fixing an end column $y$ and considering values $\segmentheight([x..y])$ for $x \in [1..y]$.
This concept was restated and extended to MSAs with gaps by \cite{Norri19scalablefounder} and \cite{Rizzo24efg} into the \emph{extensions} of suffix $\msaij{1..r}{y+1..c}$ that are \emph{meaningful}.
Consider the following definition alongside \Cref{fig:meaningfulleftextensions}.
\begin{definition}[Meaningful left extensions~\citep{Rizzo24efg}]\label{def:lextextensions}
Given $\msaij{1..r}{1..c}$, let $L,U \in [1..c]$ be some given lower bound and upper bound on the segment length,\footnote{For simplicity, this definition accepts both a lower bound and an upper bound on the segment length, even though \minUcard\ and \minLsize\ require only one of the two. If either is unspecified, we simply assume that $L = 1$ or $U = c$ accordingly, that is, there is no lower or upper bound.} with $L \le U$.
For any $y \in [1..c]$ we denote with $\mathcal{L}_{y} = \ell_{y,1}, \dots, \ell_{y,d_y}$ the \emph{meaningful left extensions} of $\msaij{1..r}{y+1..c}$, meaning the strictly decreasing sequence of all positions smaller than or equal to $y$ such that:
\begin{enumerate}
    \item $y - U + 1 \le \ell_{y,d_y} < \dots < \ell_{y,2} < \ell_{y,1} = y - L + 1$, so that $\mathcal{L}_y$ captures the left extensions of $\msaij{1..r}{y+1..c}$ of length at least $L$ and at most $U$;
    \item $\segmentheight([\ell_{y,j}..y]) \neq \segmentheight([\ell_{y,j} + 1..y])$ for $2 \le j \le d_y$, so that each $\ell_{y,j}$ marks a column where the height of the left extension changes.
\end{enumerate}
If $y < L$, that is, $\msaij{1..r}{y+1..c}$ has no left extension of size at least $L$, we define $\mathcal{L}_y = ()$ and $d_y = 0$.
Otherwise, for completeness, we define $\ell_{y,d_y+1} = \max(0, y - U)$ (see how this is used in \Cref{eq:recmeaningfulcardinality}).
\end{definition}

\begin{figure}[htp]
    \centering
    \includegraphics{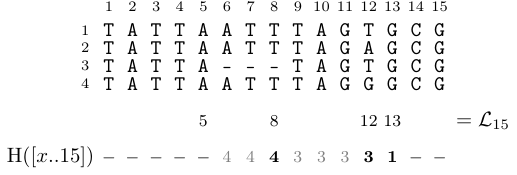}
    \caption{Example of $\msaij{1..4}{1..15}$ and of meaningful left extentions $\mathcal{L}_{15} = 13, 12, 8$ (represented from right to left in the figure), given lower bound $L = 3$ and upper bound $U = 10$.}
    \label{fig:meaningfulleftextensions}
\end{figure}

The meaningful left extensions $\mathcal{L}_y$ are indeed a compact description of all segment heights: all $[x..y]$ with $x \in [\ell_{y,j+1}+1..\ell_{y,j}]$ have height equal to $\segmentheight([\ell_{y,j}..y])$;
if the monotonicity of left extensions (\Cref{obs:leftmonotonicity}) holds, then $d_y \le r$, that is, $\lvert \mathcal{L}_y \rvert$ is at most $r$ for each $y \in [1..c]$ and the total number of meaningful left extensions is $O(rc)$.
Regardless of the number of meaningful left extensions, \Cref{eq:recstringsize} can be rewritten as
\begin{align}
    \stringsize_y 
    &= \min_{j \in [1..d_y]} \Big( \segmentheight([\ell_{y,j}..y]) + \min_{x \in [\ell_{y,j+1}+1..\ell_{y,j}]} \stringsize_{x-1} \Big). \label{eq:recmeaningfulcardinality}
\end{align}
Given pairs $(\ell_{y,j},\segmentheight([\ell_{y,j}..y]))$ in input for each $y \in [1..c]$, $j \in [1..d_y]$, we can compute values $\stringsize_y$ in a dynamic programming fashion and we can solve operation $\min_{x \in [\ell_{y,j+1}+1..\ell_{y,j}]} \stringsize_{x-1} = \min_{x \in [\ell_{y,j+1}..\ell_{y,j}-1]} \stringsize_{x}$
by indexing the array containing values $\stringsize_x$ for Range Minimum Queries with \Cref{lem:RMQ}, as shown in \Cref{alg:one}.

\begin{algorithm}[htp]
\caption{Segmentation of $\msaij{1..r}{1..c}$ with segment-length upper bound $U$ minimizing the cardinality. Note that bound $U$ is implicitly used in value $\ell_{y,d_y+1} = \max(0, y - U)$ (see \Cref{def:lextextensions} and \Cref{eq:recmeaningfulcardinality}).}\label{alg:one}
\Input{integers $r,c \in \mathbb{N}$,\\ segment length upper bound $U \in [1..c]$,\\ meaningful left extensions $(\ell_{y,j},h_{y,j})$ for $y \in [1..c]$, $j \in [1..d_y]$}
\Output{minimum cardinality of a segmentation $S_1, \dots, S_k$ such that $\lvert S_i \rvert \le U$ for $i \in [1..k]$}
Create array $\mathtt{m}[0..c]$ holding values in $[0..rc]$\;
Preprocess $\mathtt{m}$ for semi-dynamic Range Minimum Queries (\Cref{lem:RMQ})\;
$\mathtt{m}[0] \gets 0$\;
\For{$y \gets 1$ \KwTo $c$}{%
    $\mathtt{m}[y] \gets +\infty$\;
    \For{$j \gets 1$ \KwTo $d_y$}{%
        $x \gets \mathrm{RangeMinQuery}\big(\mathtt{m}, [\ell_{y,j+1}..\ell_{y,j}-1]\big)$\;
        $\mathtt{m}[y] \gets \min\Big( \mathtt{m}[y], \quad h_{y,j} + \mathtt{m}[x] \Big)$\tcp*{\Cref{eq:recmeaningfulcardinality}}
    }
    Update $\mathtt{m}$ for Range Minimum Queries over incremented range $[1..y]$\;
}
\Return{$\mathtt{m}[c]$}
\end{algorithm}

For gapless MSAs, the meaningful left extensions can be computed efficiently with a modification of the positional Burrows--Wheeler Transform (\Cref{def:pBWT}).
Indeed, \cite{Norri19scalablefounder} correctly realized that the pBWT for position $y$ already describes $\mathcal{L}_{y}$ and all changes in height (see \citep[Lemma 4]{Norri19scalablefounder}) and modified the pBWT to obtain values $(\ell_{y,j},\segmentheight([\ell_{y,j}..y]))$ in $O(r)$ time per each column $y$.
\begin{lemma}[{\citep[Lemmas 5 and 6]{Norri19scalablefounder}}]
    \label{lem:meaningfulleftcomputation}
    Let $\msaij{1..r}{1..c}$ be a gapless multiple sequence alignment over alphabet $\Sigma$, with $\lvert \Sigma \rvert \in O(r)$.
    The meaningful left extensions $\mathcal{L}_y = \ell_{y,1}, \dots, \ell_{y,d_y}$ and their corresponding segment height values $\segmentheight([\ell_{y,j}..y])$ for $j \in [1..d_y]$ can be computed for $y = 1, \dots, c$ in $O(rc)$ time and in $O(r + c)$ working space.
    Specifically, the values are computed column-wise in a streaming fashion, from $y = 1$ to $y = c$, and the space bound does not include storing the computed values.
\end{lemma}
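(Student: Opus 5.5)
We prove \Cref{lem:meaningfulleftcomputation} by maintaining the pBWT arrays $\mathtt{a}_y$ and $\mathtt{d}_y$ column by column, and reading $\mathcal{L}_y$ off the divergence array. By the preceding lemma (\citep{Dur14,MakinenN19}), $\mathtt{a}_{y}$ and $\mathtt{d}_{y}$ are obtained from $\mathtt{a}_{y-1}$ and $\mathtt{d}_{y-1}$ in $O(r)$ time. Only the current pair of arrays is kept, so the working space is $O(r)$ plus $O(c)$ for the output pointers.

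The key structural step is relating heights to divergence values. I would convert each $\mathtt{d}_y[i]$ into the starting column $s_i$ of the longest common suffix of the rows $\mathtt{a}_y[i-1]$ and $\mathtt{a}_y[i]$ ending at $y$; the definition of $\mathtt{d}_y$ differs from this only by an off-by-one. For a fixed start column $x$, the strings $\msaij{i}{x..y}$ form contiguous blocks in the co-lexicographic order $\mathtt{a}_y$, because rows agreeing on columns $x..y$ are adjacent in the order. A new distinct string starts at position $i$ exactly when the adjacent rows disagree somewhere in $[x..y]$, i.e.\ when $s_i > x$. Hence
\[
\segmentheight([x..y]) = 1 + \lvert \{ i \in [2..r] : s_i > x \} \rvert .
\]
This makes the height a non-increasing step function in $x$ (consistent with \Cref{obs:leftmonotonicity}). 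Its breakpoints are exactly the distinct values $s_i - 1$ that lie in the window $[y-U+1..y-L+1]$, together with $\ell_{y,1}=y-L+1$. It follows that $d_y \le r$.

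To extract the pairs $(\ell_{y,j},\segmentheight([\ell_{y,j}..y]))$ in $O(r)$ time, the divergence values must be produced in sorted order. Comparison sorting would cost $O(r\log r)$, so I would use a bucket or radix sort. The values $y-s_i$ capped at $U$ lie in $[0..y]$; reducing them to rank space is the delicate point. One option is to radix sort $r$ keys of range $O(rc)$ in two passes of base $r$, using $O(r)$ buffers; this gives $O(r)$ time per column whenever $c$ is polynomial in $r$. A second option is to maintain the sorted order incrementally: when moving from $y-1$ to $y$, the pBWT update only resets entries to the new column for rows that split, and the surviving values keep their relative order. This mirrors the approach of Norri et al.

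After sorting, a single scan that counts the positions with $s_i > x$ outputs the breakpoints with their heights. The alphabet assumption $\lvert\Sigma\rvert\in O(r)$ is needed only to make the pBWT update linear.

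The main obstacle I expect is this sorting and maintenance step: obtaining sorted distinct divergence values in $O(r)$ time per column within $O(r+c)$ space. My first attempt would be the incremental merge. The new divergence list is formed from the old list (already sorted) by bucketing rows according to the character in column $y$, which preserves the relative order within each bucket. A $\lvert\Sigma\rvert$-way merge of these order-preserving buckets then costs $O(r+\lvert\Sigma\rvert)=O(r)$ time.
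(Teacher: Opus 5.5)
Your reduction is correct and follows the route the paper relies on. The paper gives no argument of its own and defers to Norri et al.: their Lemma~4 gives the same height/divergence correspondence you derive, $\segmentheight([x..y]) = 1 + \lvert\{i : s_i > x\}\rvert$, and their Lemmas~5--6 are a modified pBWT update that outputs the pairs in $O(r)$ time per column. The gap is in that last step, which you rightly single out as the obstacle. Your incremental argument for it rests on a wrong description of the divergence update. Passing from column $y-1$ to $y$, an entry is not either reset to the new column or kept. A row whose new predecessor is the previous row with the same character at column $y$ receives the \emph{maximum} of the old values over the whole old range between the two rows, which is in general not its own old value. Bucketing rows by character therefore carries stale keys, and merging the buckets neither sorts the new values nor yields correct heights.

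For example, take old order $P,Q,R,S$ with old values $5,2,3$ for the pairs $(P,Q),(Q,R),(R,S)$ (with $y\ge 6$) and characters $\mathtt{A},\mathtt{C},\mathtt{A},\mathtt{A}$ at column $y$. The new order is $P,R,S,Q$ with values $5,3,y+1$ for $R,S,Q$. Inside the $\mathtt{A}$-bucket, $R$ and $S$ swap, the value $2$ disappears, and the stale keys give $\segmentheight([4..y])=2$ instead of the true $3$. Multiplicities can also grow: with characters $\mathtt{A},\mathtt{C},\mathtt{A},\mathtt{C}$ and the $(Q,R)$ value largest, both $R$ and $S$ inherit it. Two further problems: even for correctly sorted buckets, a comparison-based $\lvert\Sigma\rvert$-way merge costs $\Theta(r\log\lvert\Sigma\rvert)$ in the worst case, not $O(r+\lvert\Sigma\rvert)$. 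And your radix alternative needs $\Theta(\log_r U)$ passes of base $r$, so it is $O(r)$ per column only when $U$ is polynomial in $r$, which the lemma does not assume.

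The missing idea is that every non-reset new value is a range maximum of old values, hence \emph{equal to some old value}, and the reset value $y+1$ exceeds all of them. This gives an $O(r)$ update:
\begin{enumerate}
\item Keep the distinct divergence values in a list sorted by value, with multiplicities, and let each entry of $\mathtt{d}$ point to its node.
\item In the pBWT update, let each running maximum carry the pointer of the entry attaining it. Every new entry then points to an old node or to a fresh node for $y+1$.
\item Recompute multiplicities in one pass, unlink nodes whose multiplicity became zero, and append the fresh node at the top. Only deletions and an append of the largest value occur, so the list stays sorted, and this costs $O(r)$ time and space per column. Equivalently, counting-sort the new entries by the old rank of their argmax.
\item Scan from the top, accumulating multiplicities and restricting to $[\max(1,y-U+1)..y-L+1]$, to output the pairs $(\ell_{y,j},\segmentheight([\ell_{y,j}..y]))$.
\end{enumerate}
This is the precise sense in which ``surviving values keep their relative order''; your bucket-and-merge mechanism does not implement it.
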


Finally, in the gapless case we can simplify \Cref{eq:recmeaningfulcardinality} to $
\stringsize_y 
    = \min_{j \in [1..d_y]} ( \segmentheight([\ell_{y,j}..y]) +\stringsize_{\ell_{y,j+1}} )
$ due to the following fact.
\begin{lemma}
    \label{lem:mmonotone}
    Let $\msaij{1..r}{1..c}$ be a gapless multiple sequence alignment. 
    The values $\stringsize_y$, the sizes of optimal solutions to \minUcard\ ending at columns $y \in [1..c]$ (\Cref{eq:recstringsize}), are non-decreasing.
\end{lemma}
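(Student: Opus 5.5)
We prove that $\stringsize_{y-1} \le \stringsize_y$ for every $y \in [1..c]$; chaining these inequalities gives the claim. The case $y = 1$ is immediate, since $\stringsize_0 = 0 \le \stringsize_1$. So fix $y \ge 2$, and take an optimal segmentation $S$ of $\msaij{1..r}{1..y}$ with all segments of length at most $U$, so that $\stringsize(S) = \stringsize_y$. From $S$ we build a feasible segmentation $S'$ of $\msaij{1..r}{1..y-1}$ with $\stringsize(S') \le \stringsize(S)$. Then $\stringsize_{y-1} \le \stringsize(S') \le \stringsize_y$.

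The construction truncates the last segment of $S$. Let its last segment be $[x..y]$, and split into two cases.
\begin{itemize}
\item If $x = y$, define $S'$ by removing $\{y\}$ from $S$. The remaining segments are unchanged and still partition $[1..y-1]$, so $\stringsize(S') = \stringsize(S) - \segmentheight([y..y]) \le \stringsize(S)$.
\item If $x < y$, define $S'$ by replacing $[x..y]$ with $[x..y-1]$. This segment is shorter, so it still has length at most $U$, and the other segments are untouched. By the monotonicity of right extensions in \Cref{obs:leftmonotonicity} (valid because the MSA is gapless), $\segmentheight([x..y-1]) \le \segmentheight([x..y])$. Hence $\stringsize(S') \le \stringsize(S)$.
\end{itemize}

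For completeness, the right-extension monotonicity used in the second case can be checked directly. In a gapless MSA, $\spell(\msaij{i}{x..y}) = \msaij{i}{x..y}$, and the map sending each string $\msaij{i}{x..y}$ to its prefix $\msaij{i}{x..y-1}$ is well defined and surjective onto the set of strings spelled by $[x..y-1]$. A surjection cannot increase cardinality, so $\segmentheight([x..y-1]) \le \segmentheight([x..y])$.

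There is no real obstacle in this argument. The only point needing care is that gaplessness is essential. With gaps, removing column $y$ could merge distinct spelled strings differently: for example, $\mathtt{A}$ and $\mathtt{A}\gap$ both spell $\mathtt{A}$, so the prefix map on spelled strings is not well defined. Gaplessness is exactly what guarantees that dropping the last column induces a well-defined surjection.
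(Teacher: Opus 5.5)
Your proof is correct and matches the paper's argument: you truncate the last column of an optimal segmentation of $\msaij{1..r}{1..y}$, dropping the last segment if it is a singleton and otherwise shortening it and applying right-extension monotonicity, and then conclude $\stringsize_{y-1} \le \stringsize(S') \le \stringsize_y$. One minor point on your closing aside: the cautionary example for the gapped case is better stated as two rows $\mathtt{A}\gap$ and $\gap\mathtt{A}$ over $[x..y]$, which both spell $\mathtt{A}$ but spell $\mathtt{A}$ and $\varepsilon$ over $[x..y-1]$, so the height can genuinely increase when the last column is dropped.
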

\begin{proof}
    Consider column $y \in [2..c]$ and an optimal segmentation $S_y = [x_1..y_1],  \dots, [x_k..y]$ for \textsf{min-$U$-cardinality} on instance $\msaij{1..r}{1..y}$ respecting segment upper bound $U$, that is, $m_y$ = $m(S_y)$. Next, we construct segmentation $S_{y-1}$ by removing the last column from $S_y$, and we prove that is has smaller or equal cardinality as follows. If the last segment has length 1, simply removing it results in a lower cardinality segmentation. Otherwise $S_{y-1} = [x_1..y_1],  \dots, [x_k..y-1]$, with the last segment having a lower height $\segmentheight([x_k..y-1]) \le \segmentheight([x_k..y])$ due to the monotonicity of right extensions (\Cref{obs:leftmonotonicity}). Therefore, $m(S_{y-1}) \le m(S_y)$, while still respecting the segment upper bound $U$. $S_{y-1}$ may or may not be optimal, so $m_{y-1} \le m(S_{y-1}) \le m(S_y) = m_y$. \hfill\qedsymbol
\end{proof}

The computation of \Cref{eq:recmeaningfulcardinality} is carried by \Cref{alg:one} after the computation of the meaningful left extensions following \cite[Section 4.6]{Rizzo24efg}, or the computation from \Cref{lem:meaningfulleftcomputation} for the gapless case, obtaining the following result.
\begin{theorem}
    \label{theo:mincard}
    Let $\msaij{1..r}{1..c}$ be a multiple sequence alignment over alphabet $\Sigma \cup \lbrace \gap \rbrace$, and let $U \in [1..c]$ be an upper bound on the maximum segment length.
    We can solve \minUcard\ (\Cref{prob:minUcard}) in $O(r \cdot c \cdot U \cdot \log \lvert \Sigma \rvert)$ time and $O(c + r \cdot U)$ working space, by finding the optimal segmentation $S$ respecting $U$ and minimizing the cardinality $\stringsize(S)$.
    If the MSA is gapless and  $\lvert \Sigma \rvert \in O(r)$, then we can solve \minUcard\ in $O(rc)$ time and $O(c + r)$ working space.
\end{theorem}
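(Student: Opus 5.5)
The proof splits into a correctness argument and a complexity analysis, the latter divided into the general case (with gaps) and the gapless case. For correctness I would argue by induction on $y$ that \Cref{alg:one} stores in $\mathtt{m}[y]$ the value $\stringsize_y$ of \Cref{eq:recstringsize}. Every $x \in [\max(1,y-U+1)..y]$ lies in exactly one interval $[\ell_{y,j+1}+1..\ell_{y,j}]$ of \Cref{def:lextextensions}, taking $L=1$ so that $\ell_{y,1}=y$ and $\ell_{y,d_y+1}=\max(0,y-U)$. Inside that interval the height $\segmentheight([x..y])$ is constant and equal to $h_{y,j}$. This holds because $\ell_{y,j}$ marks the changes in height, and for gapless MSAs monotonicity (\Cref{obs:leftmonotonicity}) guarantees that no height change is missed between consecutive marks. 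For MSAs with gaps I would rely on the definition in \citep{Rizzo24efg}, which lists every change. Hence \Cref{eq:recstringsize} equals \Cref{eq:recmeaningfulcardinality}. The range-minimum query over $\mathtt{m}[\ell_{y,j+1}..\ell_{y,j}-1]$ only touches entries already finalized, because $\ell_{y,j}\le y$, and \Cref{lem:RMQ} answers it exactly. The optimal segmentation itself is recovered by storing for each $y$ the argmin pair $(j,x)$ and backtracking from $c$, which costs $O(c)$ extra space.

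For the gapless bound I would use \Cref{lem:meaningfulleftcomputation}. It streams the pairs $(\ell_{y,j},\segmentheight([\ell_{y,j}..y]))$ column by column in $O(r)$ time per column and $O(r+c)$ space, with $d_y\le r$. I would interleave this with the outer loop of \Cref{alg:one}, so the pairs for column $y$ are consumed right away and never stored. Each pair then costs one $O(1)$ query and one update, and each column costs one amortized $O(1)$ append (\Cref{lem:RMQ}). The total time is $O(rc)$, and the working space is $O(c)$ for $\mathtt{m}$, the RMQ structure and the backtracking pointers, plus $O(r)$ for the pBWT state. \Cref{lem:mmonotone} is not strictly needed here, but it would let me replace each query by a lookup of $\mathtt{m}[\ell_{y,j+1}]$.

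For the general case with gaps, monotonicity fails and $d_y$ can be as large as $U$. I would compute $\mathcal{L}_y$ directly for each $y$. For every $x$ from $y$ down to $y-U+1$, I would maintain the set of distinct strings $\spell(\msaij{i}{x..y})$ by sorting or hashing the rows incrementally. Following \cite[Section 4.6]{Rizzo24efg}, I would use a trie or radix-sort of the spelled reversed strings restricted to a window of $U$ columns, which gives $O(rU\log|\Sigma|)$ time per column and $O(rU)$ working space. Feeding these at most $U$ pairs per column to \Cref{alg:one} adds $O(U)$ per column, giving $O(rcU\log|\Sigma|)$ time and $O(c+rU)$ space. The main obstacle is this step: the per-column computation of all heights for the windowed left extensions with gaps. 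I would defer its precise cost to the cited construction rather than re-derive it, and check only that it really fits in $O(rU\log|\Sigma|)$ time and $O(rU)$ space per column.
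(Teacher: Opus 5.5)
Your proposal is correct and follows essentially the same route as the paper. Like the paper, you combine Algorithm~\ref{alg:one} (correct by \Cref{eq:recmeaningfulcardinality}) with a column-by-column computation of the meaningful left extensions: \Cref{lem:meaningfulleftcomputation} with $d_y \le r$ in the gapless case, the keyword-tree construction of \cite{Rizzo24efg} in the general case, and standard backtracking. The paper additionally drops the RMQ structure in the gapless case via Lemma~\ref{lem:mmonotone}, which you rightly note is optional.

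One small remark: $\segmentheight([x..y])$ is constant on each $[\ell_{y,j+1}+1..\ell_{y,j}]$ directly by \Cref{def:lextextensions}, which lists every position where the height changes. Monotonicity is therefore needed only for the bound $d_y \le r$, not for correctness.
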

\begin{proof}
    The final algorithm is provided by combining \Cref{alg:one}, that utilizes the meaningful left extensions $\mathcal{L}_y$ and their height values from $y = 1$ to $y = c$, with a computation of the meaningful left extensions providing pairs $(\ell_{y,j},\segmentheight([\ell_{y,j}..y]))$ in the same order.
    The correctness of \Cref{alg:one} follows from \Cref{eq:recmeaningfulcardinality}.
    In the general case with gaps, \cite[Lemma 6]{Rizzo24efg} provides a $O(r \cdot c \cdot U \cdot \log \lvert \Sigma \rvert)$-time computation of the extensions (right extensions, in the paper) using dynamic keyword trees, reaching the stated time and space complexity since the complexity of this computation is always an upper bound of the time spent by \Cref{alg:one}.
    When the MSA is gapless, the total number of iterations of the inner for-loop in \Cref{alg:one} is $O(rc)$ due to the monotonicity of left extensions (\Cref{obs:leftmonotonicity}), and we can even skip using the Range Minimum Query data structure due to \Cref{lem:mmonotone}.
    \Cref{alg:one} can be augmented with standard backtracking techniques to obtain the actual segmentation, as shown by \cite[Section 4.5]{Rizzo24efg}. \hfill\qedsymbol
\end{proof}

\subsection{Minimizing the size}
\label{subsec:size}
Similarly to \Cref{subsec:cardinality}, given $\msaij{1..r}{1..c}$ and $L \in [1..c]$, for any $y \in [1..c]$ we define $\size_y$ as the size of an optimal solution of \textsf{min-$L$-size} on instance $\msaij{1..r}{1..y}$ respecting lower bound $L$.
Then the following recursion holds, since the gap-aware size of a segmentation $S$ is the sum of the individual gap-aware segment sizes:
\begin{align}
    \size_0 &= 0, \nonumber \\
    \size_y &= \min_{x \in [1..y-L+1]} \Big( \size_{x-1} + \gapsize([x..y]) \Big) & y \in [1..c],\label{eq:recsize}
\end{align}
where $[1..y-L+1] = \emptyset$ and $\min(\emptyset) = +\infty$ when $y < L$, and it is easy to see that $\size_c$ is equal to the minimum gap-aware size $\gapsize(S)$ of a segmentation $S$ respecting $L$.

We now concentrate on gapless MSAs.
Recall that, in this setting, segment size and gap-aware segment size coincide, and they are equal to the length of the segment times its height (\Cref{obs:gaplesssize}).
Also, consider how the meaningful left extensions (\Cref{def:lextextensions}) are a compact description of all possible segment heights, as discussed in \Cref{subsec:cardinality}.
Then, in the gapless setting, \Cref{eq:recsize} for $y \in [1..c]$ can be rewritten as
\begin{align}
    \size_y &= \min_{x \in [1..y-L+1]} \Big( \size_{x-1} + \segmentheight([x..y]) \cdot (y - x + 1) \Big) & \text{(\Cref{obs:gaplesssize})} \nonumber\\
    &= \min_{x \in [1..y-L+1]} \Big( \segmentheight([x..y]) \cdot y + \size_{x-1} - \segmentheight([x..y]) \cdot (x-1) \Big)\nonumber\\
    &= \min_{j \in [1..d_y]} \Big( \segmentheight([\ell_{y,j}..y]) \cdot y \enspace+ 
    \min_{x \in [\ell_{y,j+1}..\ell_{y,j}-1]} \Big( \size_x - \segmentheight([\ell_{y,j}..y]) \cdot x \Big)  \Big), \label{eq:recmeaningfulsize}
\end{align}
where the last step holds because the meaningful left extensions $\mathcal{L}_{y}$ are defined to partition range $[1..y-L+1]$ according to the segment height $\segmentheight([x..y])$ for variable $x$, or in other words, $\segmentheight([x..y]) = \segmentheight([\ell_{y,j}..y])$ for all $x \in [\ell_{y,j+1}+1..\ell_{y,j}]$, $j \in [1..d_y]$.
\Cref{eq:recmeaningfulsize} is suitable for an efficient dynamic programming approach, shown in \Cref{alg:two}:
\begin{itemize}
    \item value $\segmentheight([\ell_{y,j}..y]) \cdot y$ inside the outer $\min$ operator depends on $y$ and on the height considered (i.e.\ the meaningful left extension), as in \Cref{eq:recmeaningfulcardinality} from \Cref{subsec:cardinality};
    \item the addends in the internal $\min$ operation depend on $x$ and on the segment height $\segmentheight([\ell_{y,j}..y]) \in [1..r]$, and since the number of possible height values is $O(r)$ we can store and maintain these recursive values for each column $x \in [1..y-1]$ and each height $h \in [1..r]$.
\end{itemize}

\begin{algorithm}[htp]
\caption{Segmentation of gapless $\msaij{1..r}{1..c}$ with segment-length lower bound $L$ minimizing the size. Note that bound $L$ is implicitly used in value $\ell_{y,1} = y-L+1$ (see \Cref{def:lextextensions}).}\label{alg:two}
\Input{integers $r,c \in \mathbb{N}$,\\ segment length lower bound $L \in [1..c]$,\\ meaningful left extensions $(\ell_{y,j},h_{y,j})$ for $y \in [1..c]$, $j \in [1..d_y]$}
\Output{minimum size of a segmentation $S_1, \dots, S_k$ such that $\lvert S_i \rvert \ge L$ for $i \in [1..k]$}
Initialize array $\mathtt{N}[1..c]$ with values in $[0..rc]$\;
Initialize matrix $\mathtt{M}[1..r,0..c]$ with values in $[0..rc]$\;
Preprocess the rows of $\mathtt{M}$ for semi-dynamic Range Minimum Queries (\Cref{lem:RMQ})\;
\For{$i \gets 1$ \KwTo $r$}{%
    $\mathtt{M}[i,0] \gets 0$\;
}
\For{$y \gets 1$ \KwTo $c$}{%
    $\mathtt{N}[y] \gets +\infty$\;
    \For{$j \gets 1$ \KwTo $d_y$}{%
        $x \gets \mathrm{RangeMinQuery}\big(\mathtt{M}[h_{y,j}], [\ell_{y,j+1}..\ell_{y,j}-1]\big)$\;
        $\mathtt{N}[y] \gets \min\Big( \mathtt{N}[y], \quad h_{y,j} \cdot y + \mathtt{M}[h_{y,j},x] \Big)$\tcp*{\Cref{eq:recmeaningfulsize}}
    }
    \For{$h \gets 1$ \KwTo $r$}{%
        $\mathtt{M}[h,y] \gets \mathtt{N}[y] - h \cdot y$\;
        Update Range Minimum Query data structure of row $h$ to cover incremented range $\mathtt{M}[h,1..y]$\;
    }
    
}
\Return{$\mathtt{N}[c]$}
\end{algorithm}

\begin{theorem}
    \label{theo:minsize}
    Given \emph{gapless} $\msaij{1..r}{1..c}$ and a lower bound $L \in [1..c]$ on the minimum segment length, we can compute in $O(rc)$ time and $O(c + L\cdot r)$ working space the optimal segmentation $S$ respecting $L$ and minimizing the size $\size(S)$.
\end{theorem}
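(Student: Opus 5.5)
Correctness and time follow from combining \Cref{alg:two} with the streaming computation of \Cref{lem:meaningfulleftcomputation}; the space bound needs a window trick. Correctness is immediate from \Cref{eq:recmeaningfulsize}: for every $h\in[1..r]$ the algorithm stores $\mathtt{M}[h,x]=\size_x - h\cdot x$, so the range-minimum query on row $h_{y,j}$ over $[\ell_{y,j+1}..\ell_{y,j}-1]$ returns exactly the inner minimum of \Cref{eq:recmeaningfulsize}. The outer loop over $j$ then realizes the outer minimum, and by induction on $y$ we get $\mathtt{N}[y]=\size_y$. Meaningful left extensions partition $[1..y-L+1]$ by height, and in the gapless case sizes equal height times length (\Cref{obs:gaplesssize}), so no candidate $x$ is missed. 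For $y<L$ we have $d_y=0$, hence $\mathtt{N}[y]=+\infty$, as \Cref{eq:recsize} requires.

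For time, \Cref{lem:meaningfulleftcomputation} streams the pairs $(\ell_{y,j},h_{y,j})$ in $O(r)$ time per column, and $d_y\le r$ by \Cref{obs:leftmonotonicity}. Each column therefore performs $O(r)$ constant-time queries, plus $r$ appends (one per row of $\mathtt{M}$) in amortized $O(1)$ each by \Cref{lem:RMQ}. The total is $O(rc)$. The optimal segmentation itself is recovered by storing, for each $y$, the argmin start column (an $O(c)$ array) and backtracking, as in \Cref{theo:mincard}.

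The main obstacle is space: storing the full matrix $\mathtt{M}[1..r,0..c]$ costs $O(rc)$, not $O(c+Lr)$. The key observation is that a query at column $y$ asks for a range $[\ell_{y,j+1}..\ell_{y,j}-1]$ in row $h$ only when $h=h_{y,j}$ is a height attained by some left extension ending at $y$. I would store the rows not as full arrays but with the queue structure of \Cref{lem:RMQueue}. Only the last $L$ columns, which are not yet queryable since $\ell_{y,1}=y-L+1$, need to be buffered for all $r$ heights; this is where the $Lr$ term comes from. For older columns, I would keep values $\size_x - h\cdot x$ only for the heights $h$ still relevant, dequeueing entries once no future query can touch them. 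The point to verify is that the height $\segmentheight([x..y])$ is non-decreasing as $x$ moves left (\Cref{obs:leftmonotonicity}) and non-decreasing in $y$ (right monotonicity). Hence once column $x$ falls into a block of height $h$, the future queries touching $x$ use heights $\ge h$ on shrinking relevant windows, and entries for heights below the current block height can be discarded. Alternatively, using that $\size_x$ is non-decreasing (analogous to \Cref{lem:mmonotone}), the overall storage outside the $L$-window reduces to $O(c)$ values: $\mathtt{N}$ plus one representative per block.

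I expect making this bookkeeping precise to be the hardest step. I would first try the plain version: a length-$L$ circular buffer of all $r$ heights, together with one amortized-$O(1)$ queue per height that receives a column from the buffer only when that column enters a block of that height. Then I would check that the total number of queue elements is $O(c+Lr)$.
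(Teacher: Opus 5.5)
Your correctness and running-time arguments match the paper's, but the space bound $O(c+L\cdot r)$ is not established, and part of your sketch for it is wrong.

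\textbf{The fallback claim is false.} You use that $\size_x$ is non-decreasing (the analogue of \Cref{lem:mmonotone}), but this fails for \minLsize. Take the gapless MSA with rows $\mathtt{AAAA}$, $\mathtt{AACC}$, $\mathtt{AAGG}$ and $L=2$. The only admissible segmentation of $[1..3]$ gives $\size_3=3\cdot 3=9$, whereas $[1..2],[3..4]$ gives $\size_4=2\cdot 1+2\cdot 3=8$. Even if $\size_x$ were monotone, the inner term $\size_x-h\cdot x$ is not monotone in $x$. So one representative per block cannot replace the range minimum.

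\textbf{The per-height-queue plan stops at the critical point.} The height $\segmentheight([x+1..y])$ seen by a stored position $x$ keeps growing with $y$. A value $\size_x-h\cdot x$ inserted once is therefore stale after the first height change. You would have to re-insert the position every time its block's height grows, and show that this is compatible with FIFO operations and costs $O(rc)$ in total. This can be done: in the gapless case blocks only merge, and block heights strictly decrease from left to right. Processing blocks from left to right, a block whose height grows is the entire content of its old queue and lands at the tail of its new one, and each position moves at most $r-1$ times. None of this is in your proposal, and you leave the element count explicitly unverified.

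\textbf{The missing idea, which the paper uses, is an upper bound on segment length from \Cref{theo:trivialsize}.} In the gapless case splitting a segment never increases the size. A segment of length $\ell\ge 2L$ can be cut into $\lfloor \ell/L\rfloor-1$ pieces of length $L$ and one piece of length $L+(\ell \bmod L)\le 2L-1$. Hence some optimal solution uses only segments of length in $[L..2L-1]$. You can then run \Cref{alg:two} with the meaningful left extensions for $U=2L-1$, so every query at column $y$ reads $\mathtt{M}[h,x]$ only for $x\in[y-2L+1..y-L]$. Replace each of the $r$ rows of $\mathtt{M}$ by a range-minimum queue (\Cref{lem:RMQueue}) over a sliding window of the last $2L$ columns. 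This costs $O(L\cdot r)$ space for the queues plus $O(c)$ for $\mathtt{N}$ and the backtracking pointers, in the same $O(rc)$ time.
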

\begin{proof}
    The final algorithm interleaves the computation of \Cref{lem:meaningfulleftcomputation}---pairs $(\ell_{y,j},h_{y,j})$ for $y \in [1..c]$ and $j \in [1..d_y]$---with \Cref{alg:two}, whose
    correctness follows from \Cref{eq:recmeaningfulsize}.
    The total number of iterations of the two inner for-loops is $O(rc)$, since the number of meaningful left extensions is $O(rc)$ (\Cref{obs:leftmonotonicity}) and the possible height values are in the range $[1..r]$.
    \Cref{alg:one} reaches the stated time complexity and uses $O(rc)$ space by applying \Cref{lem:meaningfulleftcomputation,lem:RMQ}.
    The working space can be improved from $O(r \cdot c)$ to $O(c + L \cdot r)$ by substituting matrix $\mathtt{M}$ with $r$ range minimum queues (\Cref{lem:RMQueue}) storing the recursive information for only the previous $2L$ columns (i.e.\ in range $[y-2L..y]$): due to \Cref{theo:trivialsize}, any segment with length greater or equal than $2L$ can be split into segments of length between $L$ and $2L - 1$ without increasing the total size, so an optimal solution using segments of length between $L$ and $2L$ always exists.
    Similarly to \Cref{alg:one}, \Cref{alg:two} can be augmented with standard backtracking techniques to obtain the actual segmentation. \hfill\qedsymbol
\end{proof}

\section{Experiments\label{sec:experiments}}

Before discussing our implementation and experiments, we make the following observation on the theoretical approach.
Even though \Cref{eq:recmeaningfulcardinality,eq:recsize} (but not \Cref{eq:recmeaningfulsize}) hold for MSAs with gaps, \Cref{theo:mincard,theo:minsize} work only on MSAs without gaps.
A straightforward way to extend the approach is to treat gaps as normal symbols in the alphabet, and remove them after the segmentation to obtain the final EDS.
For example, consider the height of a segment that contains \texttt{A-} and  \texttt{-A}: it is reduced by at least one after gaps are removed.

\begin{observation}\label{obs:gaps-as-symbols}
     Let $\msaij{1..r}{1..c} \in (\Sigma \cup \lbrace \gap \rbrace)^{r \times c}$ be a multiple sequence alignment and let $\mathrm{OPT} = \stringsize(S)$ be the cardinality of a solution to \minUcard.
     We can interpret $\msaij{1..r}{1..c}$ as \emph{gapless} by treating the gap symbol as a character (i.e. $\gap \in \Sigma$): let $\mathrm{OPT}_{\gap \in \Sigma}$ be the cardinality of an optimal solution in this setting.
     Then $\mathrm{OPT}_{\gap \in \Sigma} \ge \mathrm{OPT}$ so $\mathrm{OPT}_{\gap \in \Sigma}$ is an upper bound on the minimal cardinality of the original problem.
     The same argument holds for \minLsize.
 \end{observation}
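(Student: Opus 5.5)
The key point is that the two settings share the same feasible set, so only the objective changes. In both, a feasible solution is a segmentation of $[1..c]$ into intervals of length at most $U$ (respectively at least $L$). These constraints involve only the column indices and are unaffected by whether $\gap$ is treated as a letter. I will therefore fix a segmentation $S^{\gap}$ that is optimal when $\gap \in \Sigma$. It is also feasible for the original problem, so $\mathrm{OPT} \le \stringsize(S^{\gap})$. It then suffices to show, segment by segment, that the original objective never exceeds the objective in the gap-as-symbol setting.

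For cardinality, take any segment $[x..y]$. In the gap-as-symbol setting its height is $\lvert \lbrace \msaij{i}{x..y} : i \in [1..r] \rbrace \rvert$, the number of distinct raw rows. In the original setting it is $\lvert \lbrace \spell(\msaij{i}{x..y}) \rbrace \rvert$. The second set is the image of the first under the function $\spell$, and the image of a finite set under a function has at most as many elements. Hence $\segmentheight([x..y]) \le \segmentheight_{\gap\in\Sigma}([x..y])$. Summing over the segments of $S^{\gap}$ gives $\mathrm{OPT} \le \stringsize(S^{\gap}) \le \mathrm{OPT}_{\gap\in\Sigma}$.

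For size, the same image argument is applied to the multiset of lengths. Let $T_1,\dots,T_h$ be the distinct raw strings of a segment, each of length $y-x+1$, so the gap-as-symbol size is $h(y-x+1)$. I will read this as the gapless formula of \Cref{obs:gaplesssize}, where $\gap$ counts as a letter and no $\varepsilon$ term arises. Each distinct spelled string is $\spell(T_i)$ for some $i$, so I pick one such preimage for every spelled string. This gives an injection from the spelled strings into $\lbrace T_1,\dots,T_h\rbrace$ with $\lvert \spell(T_i)\rvert \le \lvert T_i \rvert$. Therefore $\size([x..y]) \le \sum_i \lvert T_i\rvert$.

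The main obstacle is the $+1$ for $\varepsilon$ in $\gapsize$. If some $T_i$ consists only of gaps, then $\spell(T_i)=\varepsilon$ contributes $0$ to $\size$ but the $+1$ must still be absorbed. This is done by charging the $+1$ to that $T_i$, whose length is $y-x+1 \ge 1$ while its spelled length is $0$. So every $T_i$ in the injection's image is charged at most $\lvert T_i\rvert$, whether it contributes its spelled length or the $+1$ for $\varepsilon$. Since the charges are distinct per spelled string, $\gapsize([x..y]) \le h(y-x+1)$. Summing over the segments of an optimal gap-as-symbol segmentation then yields the bound for \minLsize, exactly as in the cardinality case.
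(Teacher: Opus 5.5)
Your proof is correct and follows the paper's reasoning. The feasible segmentations are the same in both settings, and within each segment removing gaps (applying $\spell$) cannot increase the height, so the gaps-as-symbols optimum is feasible for the original problem and has no larger objective there. Your charging of the $+1$ for $\varepsilon$ to an all-gap raw row of length $y-x+1 \ge 1$ spells out the size case, which the paper only asserts with ``the same argument holds for \minLsize''.
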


Observing that the removal of gaps cannot increase the height, it is of interest to study how much the measures change after the gap removal in practice.
We developed a C++ implementation of \Cref{alg:one} finding $\mathrm{OPT}$ and $\mathrm{OPT}_{\gap \in \Sigma}$ as tool \texttt{mincard}, available at \url{https:github.com/algbio/eds}.
To optimize space usage, \texttt{mincard} indexes and streams the input MSA in chunks with HSTlib~\citep{bonfield2021htslib}. We implemented two approaches for the computation of the meaningful left extensions (\Cref{def:lextextensions}): (i)
$O(r \cdot c \cdot U)$ time preprocessing (assuming a constant-sized alphabet $\Sigma$) using keyword trees in order to enable computation of $\mathrm{OPT}$, where the number of meaningful left extensions is $\Theta(r \cdot U)$ in the worst case (see also \citep[Lemma 6]{Rizzo24efg}); and (ii) the $O(rc)$-time preprocessing of \Cref{theo:mincard} with the pBWT to enable the computation of $\mathrm{OPT}_{\gap \in \Sigma}$.
To show the versatility of the segmentation framework, \texttt{mincard} can additionally solve the variant of \Cref{eq:recstringsize} that considers not only the segments of length at most $U$, but also all \emph{perfect segments} representing fully conserved regions (i.e.\ columns where all input sequences express the same character), and we call this variant \texttt{mincard $U$ pc}.
In the following, we compare our methods to: the trivial segmentations $\trivialsegmentationvertical$ and $\trivialsegmentationhorizontal$ (from \Cref{sec:defs}); the heuristic segmentation $\trivialsegmentationhorizontal$ \texttt{pc} that selects the maximal perfect segments and falls back to $\trivialsegmentationhorizontal$ in-between; and the \texttt{msatoeds} tool\footnote{Available at \url{https://github.com/urbanslug/junctions/tree/master/scripts/msatoeds}} from \cite{front24}.
The latter is a greedy left-to-right method that picks segments of length at least 4 and stops extending a segment just before its height would increase.

We tested the single-threaded construction methods on an Intel(R) Xeon(R) Gold 6248 CPU @ 2.50 GHz with 376 GBs of RAM, in the University of Helsinki \texttt{kale} cluster.
The tested datasets are SARS-CoV-2 MSAs of different sizes, a simulated E.\ coli MSA, and an existing human MSA:
\begin{itemize}[nosep]
    \item four \emph{SARS-CoV-2} MSAs of sizes $[101, 29\,903]$, $[1001, 29\,903]$, $[10001, 29\,903]$, and $[100001, 29\,903]$, obtained by randomly sampling $10^2, \dots, 10^5$ complete sequences with no ambiguous characters from the National Center for Biotechnology Information (NCBI) Virus SARS-CoV-2 Data Hub\footnote{Available at \url{https://www.ncbi.nlm.nih.gov/sars-cov-2/}. The full list of 1\,106\,603 accession IDs was obtained on Apr 15th 2026 with filters `Nucleotide Completeness': complete and `Ambiguous Characters': 0.} and aligning them with \texttt{ViralMSA}~\citep{ViralMSA-bioinf} (and thus adding the \texttt{NC\_045512.2} reference);
    \item an \emph{E.\ coli} multiple alignment $\msaij{1..16}{1..6\,809\,339}$ simulated with \texttt{AliSim}~\citep{alisim}, starting from the strain \emph{K-12 MG1655} (assembly \texttt{GCF\_000005845.2}, downloaded from NCBI), based on a randomly generated phylogenetic tree with 16 leaves and mean branch distance 0.1, where on each branch 5\% of columns are modified according to the Jukes--Cantor substitution model and every 100 substitutions an indel of average length 1000 is introduced;\footnote{The command used is \texttt{iqtree {-}{-}alisim input/msa {-}{-}root-seq e\_coli\_K-12\_MG1655.fasta,NC\_000913.3 {-}{-}indel 0.01,0.01 \-\-indel-size "POW\{1.7/1000\},POW\{1.7/1000\}" -t "RANDOM\{yh/16\}" -m "JC+I\{0.95\}" -af fasta \-\-seed 302288}.}
    \item a human chromosome 19 multiple alignment $\msaij{1..1\,000}{1..59\,451\,290}$ aligned with tool \texttt{PANAMA}~\citep{olbrich2025generating} starting from the short variants\footnote{The variants in VCF format were used to obtain the original input sequences with \texttt{bcftools}~\citep{massimilianorossi}.} of the 1000 Genomes Project, originally from \cite{BoucherGIKLMNP021}.
\end{itemize}
The features of the MSAs are summarized in \Cref{tab:MSA-stats}.

\begin{table}[htp]
    \centering
    \caption{Statistics of the tested MSAs, where $r$ is the number of rows, $c$ is the number of columns, `bps' is the number of non-gap characters (in millions), `\% gaps' is the percentage of gaps in the total MSA size $r \cdot c$, `\% p.\ cols' is the percentage of perfect columns in the total column number $c$.}
    \vspace*{1ex}
    \begin{tabular}{lrrrrrrr}
        \toprule
        MSA & $r$ & $c$ & bps (M) & \% gaps & \% p.\ cols \\
        \midrule
        SARS-CoV-2-$10^2$ &    101 & 29\,903 &    3.0 & 0.49 & 95.97 \\
        SARS-CoV-2-$10^3$ &   1001 & 29\,903 &   29.8 & 0.47 & 85.16 \\
        SARS-CoV-2-$10^4$ &  10001 & 29\,903 &  297.6 & 0.49 & 50.49 \\
        SARS-CoV-2-$10^5$ & 100001 & 29\,903 & 2975.6 & 0.49 &  9.60 \\
        E.\ Coli (sim) &   16 &  6\,809\,339 &      74.3 &  31.84 & 41.65 \\
        Human chr19    & 1000 & 59\,451\,290 & 59\,065.1 &   0.65 & 97.71 \\
        \bottomrule
    \end{tabular}
    \label{tab:MSA-stats}
\end{table}

In \Cref{fig:covid-card-size,fig:ecoli_human}, we compare the EDS cardinalities and sizes (under the $\mathrm{OPT}_{\gap \in \Sigma}$ segmentation)  for \texttt{msatoeds} and \texttt{mincard} using different segment length upper bound values $U$.
Tool \texttt{msatoeds} exceeded the 376GB RAM limit on the human dataset.\footnote{Using a different cluster node with 1000GB of memory, \texttt{msatoeds} did not complete in under 24 hours on the human dataset.}
As conjectured in \Cref{sec:defs}, a smaller value of $U$ encourages compression and recombination, reducing the size.
Moreover, when the percentage of perfect columns is low, using the perfect columns is less effective, and when the number of input sequences is very high the non-trivial methods give a smaller cardinality than $\trivialsegmentationhorizontal$, with the exception of $\trivialsegmentationhorizontal\ \mathtt{pc}$.
The EDSes computed by treating gaps as symbols, finding $\mathrm{OPT}_{\gap \in \Sigma}$, segmenting, and only then removing the gaps, show a negligible degradation of the EDS features in practice: the cardinalities and sizes increase by at most $0.09\%$ and $0.25\%$, respectively (see \Cref{tab:opt} in \Cref{sect:running-times}, where we show the analogous results on $\mathrm{OPT}$ for small $U$).

\begin{figure}[htp]
    \centering
    \includegraphics{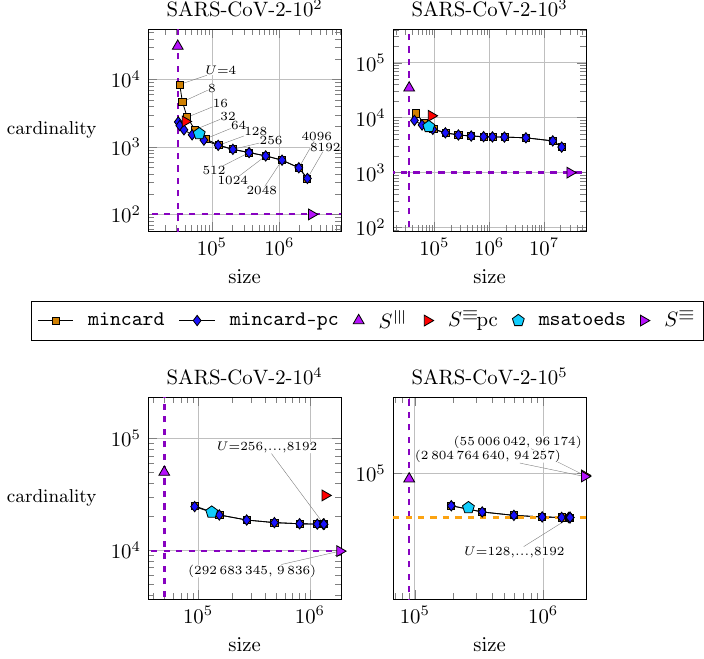}
    \caption{Log-log comparison of size and cardinality of different EDS construction algorithms on the SARS-CoV-2 datasets, where the trivial segmentations $\trivialsegmentationvertical$ (every column is a segment) and $\trivialsegmentationhorizontal$ (no segmentation) give heuristic lower bounds to the size and cardinality, respectively, marked by the dashed lines. Note that the axes do not start at $10^0$.}
    \label{fig:covid-card-size}
\end{figure}
\begin{figure}[htp]
    \centering
    \includegraphics{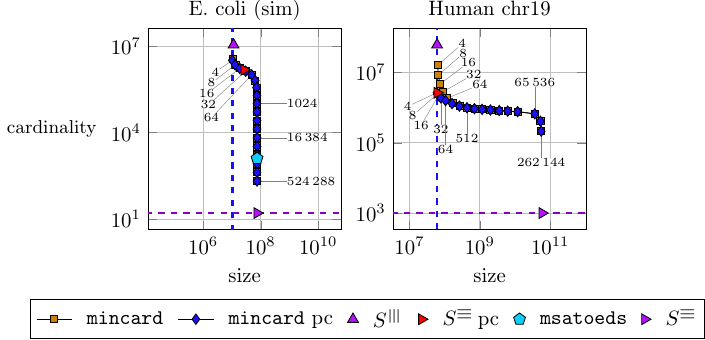}
    \caption{Log-log comparison of size and cardinality of different EDS construction algorithms, where the trivial segmentations $\trivialsegmentationvertical$ (every column is a segment) and $\trivialsegmentationhorizontal$ (no segmentation) give heuristic lower bounds to the size and cardinality, respectively, marked by the dashed lines. Note that the axes do not start at $10^0$.}
    \label{fig:ecoli_human}
\end{figure}

Figure~\ref{fig:runningtime-ecoli} illustrates the running times on part of the datasets. Here \texttt{mincard} is run in the mode that computes segmentations under $\mathrm{OPT}_{\gap \in \Sigma}$. On SARS-CoV-2 the maximum memory usage was 2.8 GB over runs on different parameters of \texttt{mincard}, 2.8 GB on $\trivialsegmentationvertical$, 2.8 GB on $\trivialsegmentationhorizontal$ \texttt{pc}, and 26 GB on \texttt{msatoeds}. On E.\ Coli the maximum memory usage was 225 MB over runs on different parameters of \texttt{mincard}, 111 MB on $\trivialsegmentationvertical$, 34 MB on $\trivialsegmentationhorizontal$ \texttt{pc}, and 1.7 GB on \texttt{msatoeds}. On Human chromose 19 the maximum memory usage was 1.9 GB over runs on different parameters of \texttt{mincard}, 1.0 GB on $\trivialsegmentationvertical$, and 163 MB on $\trivialsegmentationhorizontal$ \texttt{pc}, while \texttt{msatoeds} did not finish on this input.
While the running time of \texttt{mincard} run in this gaps as symbol mode is not asymptotically dependent on parameter $U$, some mild effect is visible due to the output segmentation depending on the parameter. The independence of \texttt{mincard} on parameter $U$ in this mode can be observed when comparing to the \texttt{mincard} in the mode that computes segmentations under $\mathrm{OPT}$, see \Cref{fig:scalability} in \Cref{sect:running-times}.

\begin{figure}
\centering
\resizebox{0.7\textwidth}{!}{%
\begin{tikzpicture}
\node[anchor=south,font=\normalsize\bfseries] at (9.65,0.5) {SARS-CoV-2-$\mathbf{10^5}$};
\node[anchor=east,font=\small] at (3.7500,0.0000) {\texttt{mincard}};
\draw (4.33959,0.00000) circle (0.13cm);
\fill (4.30417,0.00000) circle (0.08cm);
\fill (4.29291,0.00000) circle (0.08cm);
\fill (4.28922,0.00000) circle (0.08cm);
\fill (4.28821,0.00000) circle (0.08cm);
\fill (4.28542,0.00000) circle (0.08cm);
\fill (4.28627,0.00000) circle (0.08cm);
\fill (4.28571,0.00000) circle (0.08cm);
\fill (4.28620,0.00000) circle (0.08cm);
\fill (4.28617,0.00000) circle (0.08cm);
\fill (4.27909,0.00000) circle (0.08cm);
\node[inner sep=0pt,font=\large] at (4.28084,0.00000) {$\triangle$};
\node[anchor=east,font=\small] at (3.7500,-0.9500) {\texttt{mincard pc}};
\draw (4.35128,-0.95000) circle (0.13cm);
\fill (4.32353,-0.95000) circle (0.08cm);
\fill (4.31328,-0.95000) circle (0.08cm);
\fill (4.30368,-0.95000) circle (0.08cm);
\fill (4.30167,-0.95000) circle (0.08cm);
\fill (4.29688,-0.95000) circle (0.08cm);
\fill (4.30529,-0.95000) circle (0.08cm);
\fill (4.30352,-0.95000) circle (0.08cm);
\fill (4.29900,-0.95000) circle (0.08cm);
\fill (4.29946,-0.95000) circle (0.08cm);
\fill (4.28969,-0.95000) circle (0.08cm);
\node[inner sep=0pt,font=\large] at (4.29100,-0.95000) {$\triangle$};
\node[anchor=east,font=\small] at (3.7500,-1.9000) {$\trivialsegmentationhorizontal$};
\fill (4.20068,-1.90000) circle (0.08cm);
\node[anchor=east,font=\small] at (3.7500,-2.8500) {$\trivialsegmentationvertical$ \texttt{pc}};
\fill (4.05325,-2.85000) circle (0.08cm);
\node[anchor=east,font=\small] at (3.7500,-3.8000) {\texttt{msatoeds}};
\fill (15.00000,-3.80000) circle (0.08cm);
\draw[->] (4.0000,-4.5500) -- (15.3000,-4.5500);
\draw (4.00000,-4.61000) -- (4.00000,-4.49000);
\node[anchor=north,font=\scriptsize] at (4.00000,-4.65000) {0};
\draw (5.19051,-4.61000) -- (5.19051,-4.49000);
\node[anchor=north,font=\scriptsize] at (5.19051,-4.65000) {2000};
\draw (6.38103,-4.61000) -- (6.38103,-4.49000);
\node[anchor=north,font=\scriptsize] at (6.38103,-4.65000) {4000};
\draw (7.57154,-4.61000) -- (7.57154,-4.49000);
\node[anchor=north,font=\scriptsize] at (7.57154,-4.65000) {6000};
\draw (8.76205,-4.61000) -- (8.76205,-4.49000);
\node[anchor=north,font=\scriptsize] at (8.76205,-4.65000) {8000};
\draw (9.95257,-4.61000) -- (9.95257,-4.49000);
\node[anchor=north,font=\scriptsize] at (9.95257,-4.65000) {10000};
\draw (11.14308,-4.61000) -- (11.14308,-4.49000);
\node[anchor=north,font=\scriptsize] at (11.14308,-4.65000) {12000};
\draw (12.33359,-4.61000) -- (12.33359,-4.49000);
\node[anchor=north,font=\scriptsize] at (12.33359,-4.65000) {14000};
\draw (13.52411,-4.61000) -- (13.52411,-4.49000);
\node[anchor=north,font=\scriptsize] at (13.52411,-4.65000) {16000};
\draw (14.71462,-4.61000) -- (14.71462,-4.49000);
\node[anchor=north,font=\scriptsize] at (14.71462,-4.65000) {18000};
\node[anchor=north,font=\small] at (9.5000,-5.0000) {Running time (s)};

\end{tikzpicture}
}\\
\resizebox{0.7\textwidth}{!}{%
\begin{tikzpicture}
\node[anchor=south,font=\normalsize\bfseries] at (9.65,0.5) {E. Coli};
\node[anchor=east,font=\small] at (3.7500,0.0000) {\texttt{mincard}};
\draw (6.08118,0.00000) circle (0.13cm);
\fill (5.48328,0.00000) circle (0.08cm);
\fill (5.43345,0.00000) circle (0.08cm);
\fill (5.39512,0.00000) circle (0.08cm);
\fill (5.36063,0.00000) circle (0.08cm);
\fill (5.37213,0.00000) circle (0.08cm);
\fill (5.34146,0.00000) circle (0.08cm);
\fill (5.36446,0.00000) circle (0.08cm);
\fill (5.37213,0.00000) circle (0.08cm);
\fill (5.33763,0.00000) circle (0.08cm);
\fill (5.36446,0.00000) circle (0.08cm);
\fill (5.32230,0.00000) circle (0.08cm);
\fill (5.31080,0.00000) circle (0.08cm);
\fill (5.29930,0.00000) circle (0.08cm);
\fill (5.28014,0.00000) circle (0.08cm);
\fill (5.30697,0.00000) circle (0.08cm);
\fill (5.26864,0.00000) circle (0.08cm);
\node[inner sep=0pt,font=\large] at (5.38362,0.00000) {$\triangle$};
\node[anchor=east,font=\small] at (3.7500,-0.9500) {\texttt{mincard pc}};
\draw (5.92021,-0.95000) circle (0.13cm);
\fill (5.78223,-0.95000) circle (0.08cm);
\fill (5.74007,-0.95000) circle (0.08cm);
\fill (5.71324,-0.95000) circle (0.08cm);
\fill (5.68258,-0.95000) circle (0.08cm);
\fill (5.70557,-0.95000) circle (0.08cm);
\fill (5.68641,-0.95000) circle (0.08cm);
\fill (5.69791,-0.95000) circle (0.08cm);
\fill (5.67491,-0.95000) circle (0.08cm);
\fill (5.65192,-0.95000) circle (0.08cm);
\fill (5.65575,-0.95000) circle (0.08cm);
\fill (5.68641,-0.95000) circle (0.08cm);
\fill (5.68258,-0.95000) circle (0.08cm);
\fill (5.61359,-0.95000) circle (0.08cm);
\fill (5.61359,-0.95000) circle (0.08cm);
\fill (5.61359,-0.95000) circle (0.08cm);
\fill (5.59443,-0.95000) circle (0.08cm);
\node[inner sep=0pt,font=\large] at (5.65958,-0.95000) {$\triangle$};
\node[anchor=east,font=\small] at (3.7500,-1.9000) {$\trivialsegmentationhorizontal$};
\fill (5.87422,-1.90000) circle (0.08cm);
\node[anchor=east,font=\small] at (3.7500,-2.8500) {$\trivialsegmentationvertical$ \texttt{pc}};
\fill (4.78188,-2.85000) circle (0.08cm);
\node[anchor=east,font=\small] at (3.7500,-3.8000) {\texttt{msatoeds}};
\fill (15.00000,-3.80000) circle (0.08cm);
\draw[->] (4.0000,-4.5500) -- (15.3000,-4.5500);
\draw (4.00000,-4.61000) -- (4.00000,-4.49000);
\node[anchor=north,font=\scriptsize] at (4.00000,-4.65000) {0};
\draw (5.91638,-4.61000) -- (5.91638,-4.49000);
\node[anchor=north,font=\scriptsize] at (5.91638,-4.65000) {5};
\draw (7.83275,-4.61000) -- (7.83275,-4.49000);
\node[anchor=north,font=\scriptsize] at (7.83275,-4.65000) {10};
\draw (9.74913,-4.61000) -- (9.74913,-4.49000);
\node[anchor=north,font=\scriptsize] at (9.74913,-4.65000) {15};
\draw (11.66551,-4.61000) -- (11.66551,-4.49000);
\node[anchor=north,font=\scriptsize] at (11.66551,-4.65000) {20};
\draw (13.58188,-4.61000) -- (13.58188,-4.49000);
\node[anchor=north,font=\scriptsize] at (13.58188,-4.65000) {25};
\node[anchor=north,font=\small] at (9.5000,-5.0000) {Running time (s)};

\end{tikzpicture}
}\\
\resizebox{0.7\textwidth}{!}{%
\begin{tikzpicture}
\node[anchor=south,font=\normalsize\bfseries] at (9.65,0.5) {Human chr19};
\node[anchor=east,font=\small] at (3.7500,0.0000) {\texttt{mincard}};
\draw (11.83839,0.00000) circle (0.13cm);
\fill (10.48661,0.00000) circle (0.08cm);
\fill (10.36328,0.00000) circle (0.08cm);
\fill (9.70812,0.00000) circle (0.08cm);
\fill (9.45121,0.00000) circle (0.08cm);
\fill (9.07762,0.00000) circle (0.08cm);
\fill (9.29760,0.00000) circle (0.08cm);
\fill (9.06181,0.00000) circle (0.08cm);
\fill (9.07818,0.00000) circle (0.08cm);
\fill (9.07623,0.00000) circle (0.08cm);
\fill (9.16547,0.00000) circle (0.08cm);
\fill (9.24353,0.00000) circle (0.08cm);
\fill (9.36805,0.00000) circle (0.08cm);
\fill (9.58453,0.00000) circle (0.08cm);
\fill (10.15550,0.00000) circle (0.08cm);
\fill (10.59392,0.00000) circle (0.08cm);
\node[inner sep=0pt,font=\large] at (11.01829,0.00000) {$\triangle$};
\node[anchor=east,font=\small] at (3.7500,-0.9500) {\texttt{mincard pc}};
\draw (12.91612,-0.95000) circle (0.13cm);
\fill (12.91782,-0.95000) circle (0.08cm);
\fill (12.86390,-0.95000) circle (0.08cm);
\fill (12.85845,-0.95000) circle (0.08cm);
\fill (12.97410,-0.95000) circle (0.08cm);
\fill (13.28008,-0.95000) circle (0.08cm);
\fill (12.83224,-0.95000) circle (0.08cm);
\fill (12.91061,-0.95000) circle (0.08cm);
\fill (12.67791,-0.95000) circle (0.08cm);
\fill (12.87915,-0.95000) circle (0.08cm);
\fill (12.92874,-0.95000) circle (0.08cm);
\fill (13.23070,-0.95000) circle (0.08cm);
\fill (13.63988,-0.95000) circle (0.08cm);
\fill (13.62113,-0.95000) circle (0.08cm);
\fill (13.82582,-0.95000) circle (0.08cm);
\fill (14.29422,-0.95000) circle (0.08cm);
\node[inner sep=0pt,font=\large] at (15.00000,-0.95000) {$\triangle$};
\node[anchor=east,font=\small] at (3.7500,-1.9000) {$\trivialsegmentationhorizontal$};
\fill (13.50594,-1.90000) circle (0.08cm);
\node[anchor=east,font=\small] at (3.7500,-2.8500) {$\trivialsegmentationvertical$ \texttt{pc}};
\fill (9.45121,-2.85000) circle (0.08cm);
\draw[->] (4.0000,-3.6000) -- (15.3000,-3.6000);
\draw (4.00000,-3.66000) -- (4.00000,-3.54000);
\node[anchor=north,font=\scriptsize] at (4.00000,-3.70000) {0};
\draw (6.57473,-3.66000) -- (6.57473,-3.54000);
\node[anchor=north,font=\scriptsize] at (6.57473,-3.70000) {500};
\draw (9.14945,-3.66000) -- (9.14945,-3.54000);
\node[anchor=north,font=\scriptsize] at (9.14945,-3.70000) {1000};
\draw (11.72418,-3.66000) -- (11.72418,-3.54000);
\node[anchor=north,font=\scriptsize] at (11.72418,-3.70000) {1500};
\draw (14.29890,-3.66000) -- (14.29890,-3.54000);
\node[anchor=north,font=\scriptsize] at (14.29890,-3.70000) {2000};
\node[anchor=north,font=\small] at (9.5000,-4.0500) {Running time (s)};

\end{tikzpicture}
}

\caption{Running time comparison on the largest SaRS-CoV-2 dataset, E.\ Coli dataset, and human chromosome 19 dataset. Here \texttt{mincard} is run in the gaps as symbol mode and the shown running times correspond to different values of parameter $U=4,8,\ldots$ with the smallest and largest marked with empty circle and $\triangle$, respectively\label{fig:runningtime-ecoli}.}
\end{figure}

\section{Discussion}
We have introduced linear-time solutions to the minimum-cardinality (\minUcard, \Cref{prob:minUcard}) and the minimum-size  (\minLsize, \Cref{prob:minLsize}) segmentation problems, when the multiple sequence alignment in input is gapless and we impose a maximum or minimum segment length $U$ or $L$, respectively.
These constraints have an important effect on the features of the resulting EDS, and similar constraints are present in the literature of segmentation algorithms.
For the more realistic general case with gaps, our experiments indicate that treating gaps as symbols and removing them after segmentation does not degrade much the EDS of minimum cardinality in practice.
We speculate that similar results might hold for the related \minLsize\ problem. 
Finally, we remark that the solutions to \minUcard\ and \minLsize\ extend the literature on segmentation algorithms, and this line of research could contribute to versatile, practical, and scalable algorithm for the construction of pangenomes.

The following theoretical questions prevail:
\begin{description}
    \item[Gaps as symbols yield exact solutions?] 
    At least for height optimization, it seems possible to characterize the class of MSAs with gaps where the gaps-as-symbols approach yields an optimal solution, and not just an upper bound. Such characterization should be possible through forbidding local sub-optimal alignments like \texttt{A-} and  \texttt{-A}. This raises several natural questions: What is the exact form of the characterization? 
    How to efficiently detect if a given MSA is part of this class? Do all optimal alignments under some column-wise scoring function belong to this class? Are there efficient algorithms to convert an MSA into one in this class without sacrificing the alignment score?
    \item[Tailored algorithms with gaps.]
    The above approaches will not work for arbitrary MSAs and all quality measures (like size), so it is natural to ask if there are efficient algorithms to directly optimize the measures on general MSAs. Alternatively, there may be slight adjustments to the measures that are amenable to efficient algorithms: in different contexts, the notions of prefix-aware height and segment length (which is different than the string length after the removal of gaps) have been introduced or optimized \citep{Rizzo24efg}.
\end{description}

\section*{Code and data availability}

The source code and data to reproduce the experiments are publicly available at \url{https:github.com/algbio/eds}.

\section*{Supplementary material}

See \Cref{sect:running-times}.

\section*{Funding information}

This work was partially supported by the PANGAIA,  ALPACA, and TeamPerMed projects that received funding from the European Union’s Horizon 2020 research and innovation programme with two first under the Marie Skłodowska-Curie grant agreements No.\ 872539 and 956229, respectively, and the third under grant agreement No.\ 101060011. NP was partially supported by MUR PRIN 2022 YRB97K PINC.

\section*{Acknowledgements}

The authors wish to thank the Finnish Computing Competence Infrastructure (FCCI) for supporting this project with computational and data storage resources. The first prototype implementation of the software was generated using ChatGPT. ChatGPT was also used to generate code to transform part of the experiment logs into visualizations.


\newpage

\appendix

\section{EDS statistics and running times\label{sect:running-times}}

{\small\begin{longtable}{crcrrrrrrr}
    \caption{EDS Cardinalities and sizes obtained by \texttt{mincard} on the tested MSAs, upper bound on the segment length $U$, and optional use of the perfect segments (pc). The two tested modes are $\mathrm{OPT}$, using the standard meaningful left extension (\Cref{def:lextextensions}), and $\mathrm{OPT}_{\gap \in \Sigma}$, approximating the result with the gaps-as-symbols strategy (\Cref{obs:gaps-as-symbols}). Note that for the latter problem, the shown `card' value is the final EDS cardinality and not the DP value from \Cref{eq:recmeaningfulcardinality}.}~\label{tab:opt}\\
    \toprule
    \multirow{2}*{dataset} & \multirow{2}*{$U$} & \multirow{2}*{pc} & \multicolumn{2}{c}{$\mathrm{OPT}$} & \multicolumn{2}{c}{$\mathrm{OPT}_{\gap \in \Sigma}$} & \multicolumn{2}{c}{gaps-as-symbol effect} \\
    & & & card & size & card & size & card & size\\
    \midrule
    SARS-CoV-2-$10^2$ & 4 & & 8\,436 & 33\,306 & 8\,436 & 33\,306 & -- & -- \\
     & 8 & & 4\,648 & 36\,548 & 4\,648 & 36\,547 & -- & $-$0.00\% \\
     & 16 & & 2\,754 & 43\,001 & 2\,754 & 43\,001 & -- & -- \\
     & 32 & & 1\,800 & 55\,886 & 1\,800 & 55\,886 & -- & -- \\
     & 64 & & 1\,319 & 80\,347 & 1\,319 & 80\,347 & -- & -- \\
     & 128 & & 1\,061 & 126\,366 & 1\,061 & 126\,366 & -- & -- \\
     & 256 & & 921 & 204\,398 & 921 & 204\,398 & -- & -- \\
     & 512 & & 825 & 355\,528 & 825 & 355\,528 & -- & -- \\
     & 4 & \checkmark & 2\,345 & 31\,697 & 2\,345 & 31\,697 & -- & -- \\
     & 8 & \checkmark & 2\,073 & 33\,393 & 2\,073 & 33\,393 & -- & -- \\
     & 16 & \checkmark & 1\,792 & 38\,507 & 1\,792 & 38\,507 & -- & -- \\
     & 32 & \checkmark & 1\,512 & 50\,881 & 1\,512 & 50\,881 & -- & -- \\
     & 64 & \checkmark & 1\,261 & 76\,029 & 1\,261 & 76\,029 & -- & -- \\
     & 128 & \checkmark & 1\,057 & 123\,840 & 1\,057 & 123\,840 & -- & -- \\
     & 256 & \checkmark & 921 & 204\,398 & 921 & 204\,398 & -- & -- \\
     & 512 & \checkmark & 825 & 355\,528 & 825 & 355\,528 & -- & -- \\
     \midrule
    SARS-CoV-2-$10^3$ & 4 & & 11\,954 & 46\,832 & 11\,954 & 46\,837 & -- & $+$0.01\% \\
    & 8 & & 8\,143 & 63\,397 & 8\,143 & 63\,402 & -- & $+$0.01\% \\
    & 16 & & 6\,246 & 96\,231 & 6\,246 & 96\,230 & -- & $-$0.00\% \\
    & 32 & & 5\,299 & 159\,839 & 5\,299 & 159\,839 & -- & -- \\
    & 64 & & 4\,841 & 274\,541 & 4\,841 & 274\,541 & -- & -- \\
    & 128 & & 4\,621 & 469\,751 & 4\,621 & 469\,751 & -- & -- \\
    & 256 & & 4\,510 & 789\,414 & 4\,510 & 789\,414 & -- & -- \\
    & 512 & & 4\,460 & 1\,144\,693 & 4\,460 & 1\,144\,693 & -- & -- \\
    & 4 & \checkmark & 8\,913 & 43\,386 & 8\,913 & 43\,391 & -- & $+$0.01\% \\
    & 8 & \checkmark & 7\,353 & 59\,721 & 7\,354 & 59\,716 & $+$0.01\% & $-$0.01\% \\
    & 16 & \checkmark & 6\,125 & 93\,798 & 6\,125 & 93\,797 & -- & $-$0.00\% \\
    & 32 & \checkmark & 5\,291 & 159\,056 & 5\,291 & 159\,056 & -- & -- \\
    & 64 & \checkmark & 4\,841 & 274\,527 & 4\,841 & 274\,527 & -- & -- \\
    & 128 & \checkmark & 4\,621 & 469\,751 & 4\,621 & 469\,751 & -- & -- \\
    & 256 & \checkmark & 4\,510 & 789\,414 & 4\,510 & 789\,414 & -- & -- \\
    & 512 & \checkmark & 4\,460 & 1\,144\,693 & 4\,460 & 1\,144\,693 & -- & -- \\
    \midrule
    SARS-CoV-2-$10^4$ & 4 & & 24\,892 & 94\,207 & 24\,894 & 94\,218 & $+$0.01\% & $+$0.01\% \\
    & 8 & & 20\,710 & 155\,041 & 20\,714 & 155\,108 & $+$0.02\% & $+$0.04\% \\
    & 16 & & 18\,647 & 271\,363 & 18\,650 & 272\,029 & $+$0.02\% & $+$0.25\% \\
    & 32 & & 17\,681 & 480\,761 & 17\,683 & 481\,037 & $+$0.02\% & $+$0.06\% \\
    & 64 & & 17\,271 & 807\,248 & 17\,275 & 807\,579 & $+$0.02\% & $+$0.04\% \\
    & 128 & & 17\,141 & 1\,163\,355 & 17\,145 & 1\,162\,506 & $+$0.02\% & $-$0.07\% \\
    & 256 & & 17\,121 & 1\,312\,482 & 17\,125 & 1\,311\,633 & $+$0.02\% & $-$0.06\% \\
    & 512 & & 17\,119 & 1\,330\,925 & 17\,123 & 1\,330\,076 & $+$0.02\% & $-$0.06\% \\
    & 4 & \checkmark & 24\,577 & 93\,154 & 24\,579 & 93\,165 & $+$0.01\% & $+$0.01\% \\
    & 8 & \checkmark & 20\,692 & 154\,652 & 20\,696 & 154\,711 & $+$0.02\% & $+$0.04\% \\
    & 16 & \checkmark & 18\,647 & 271\,253 & 18\,650 & 271\,921 & $+$0.02\% & $+$0.25\% \\
    & 32 & \checkmark & 17\,681 & 480\,761 & 17\,683 & 481\,037 & $+$0.01\% & $+$0.06\% \\
    & 64 & \checkmark & 17\,271 & 807\,248 & 17\,275 & 807\,579 & $+$0.02\% & $+$0.04\% \\
    & 128 & \checkmark & 17\,141 & 1\,163\,355 & 17\,145 & 1\,162\,506 & $+$0.02\% & $-$0.07\% \\
    & 256 & \checkmark & 17\,121 & 1\,312\,482 & 17\,125 & 1\,311\,633 & $+$0.02\% & $-$0.06\% \\
    & 512 & \checkmark & 17\,119 & 1\,330\,925 & 17\,123 & 1\,330\,076 & $+$0.02\% & $-$0.06\% \\
    \midrule
    SARS-CoV-2-$10^5$ & 4 & & 55\,737 & 192\,852 & 55\,780 & 192\,385 & $+$0.08\% & $-$0.24\% \\
    & 8 & & 49\,845 & 335\,595 & 49\,889 & 334\,962 & $+$0.09\% & $-$0.19\% \\
    & 16 & & 46\,884 & 591\,668 & 46\,920 & 591\,518 & $+$0.08\% & $-$0.03\% \\
    & 32 & & 45\,603 & 983\,313 & 45\,634 & 981\,999 & $+$0.07\% & $-$0.13\% \\
    & 64 & & 45\,198 & 1\,395\,690 & 45\,230 & 1\,392\,327 & $+$0.07\% & $-$0.24\% \\
    & 128 & & 45\,145 & 1\,579\,930 & 45\,177 & 1\,576\,699 & $+$0.07\% & $-$0.20\% \\
    & 256 & & 45\,142 & 1\,618\,986 & 45\,174 & 1\,615\,755 & $+$0.07\% & $-$0.20\% \\
    & 512 & & 45\,142 & 1\,618\,986 & 45\,174 & 1\,615\,755 & $+$0.07\% & $-$0.20\% \\
    & 4 & \checkmark & 55\,731 & 192\,805 & 55\,774 & 192\,338 & $+$0.08\% & $-$0.24\% \\
    & 8 & \checkmark & 49\,845 & 335\,562 & 49\,889 & 334\,929 & $+$0.09\% & $-$0.19\% \\
    & 16 & \checkmark & 46\,884 & 591\,668 & 46\,920 & 591\,518 & $+$0.08\% & $-$0.03\% \\
    & 32 & \checkmark & 45\,603 & 983\,313 & 45\,634 & 981\,999 & $+$0.07\% & $-$0.13\% \\
    & 64 & \checkmark & 45\,198 & 1\,395\,690 & 45\,230 & 1\,392\,327 & $+$0.07\% & $-$0.24\% \\
    & 128 & \checkmark & 45\,145 & 1\,579\,930 & 45\,177 & 1\,576\,699 & $+$0.07\% & $-$0.20\% \\
    & 256 & \checkmark & 45\,142 & 1\,618\,986 & 45\,174 & 1\,615\,755 & $+$0.07\% & $-$0.20\% \\
    & 512 & \checkmark & 45\,142 & 1\,618\,986 & 45\,174 & 1\,615\,755 & $+$0.07\% & $-$0.20\% \\
    \midrule
    E.\ Coli (sim) & 4 & & 3\,639\,064 & 10\,979\,841 & 3\,640\,255 & 10\,984\,229 & $+$0.03\% & $+$0.04\% \\
    & 8 & & 2\,355\,853 & 13\,722\,719 & 2\,356\,661 & 13\,727\,963 & $+$0.03\% & $+$0.04\% \\
    & 16 & & 1\,726\,229 & 19\,147\,258 & 1\,726\,693 & 19\,157\,226 & $+$0.03\% & $+$0.05\% \\
    & 32 & & 1\,373\,078 & 30\,006\,581 & 1\,373\,358 & 30\,030\,869 & $+$0.02\% & $+$0.08\% \\
    & 64 & & 1\,024\,562 & 48\,505\,128 & 1\,024\,715 & 48\,521\,699 & $+$0.01\% & $+$0.03\% \\
    & 4 & \checkmark & 3\,284\,126 & 10\,250\,868 & 3\,285\,203 & 10\,254\,712 & $+$0.03\% & $+$0.04\% \\
    & 8 & \checkmark & 2\,271\,544 & 12\,814\,406 & 2\,272\,293 & 12\,819\,352 & $+$0.03\% & $+$0.04\% \\
    & 16 & \checkmark & 1\,716\,426 & 18\,577\,397 & 1\,716\,875 & 18\,587\,511 & $+$0.03\% & $+$0.05\% \\
    & 32 & \checkmark & 1\,372\,418 & 29\,800\,277 & 1\,372\,700 & 29\,824\,290 & $+$0.02\% & $+$0.08\% \\
    & 64 & \checkmark & 1\,024\,460 & 48\,236\,460 & 1\,024\,613 & 48\,252\,330 & $+$0.01\% & $+$0.03\% \\
    \midrule
    Human chr19 & 4 & & 15\,850\,571 & 62\,605\,930 & 15\,850\,710 & 62\,606\,319 & $+$0.00\% & $+$0.00\% \\
    & 8 & & 8\,369\,866 & 66\,084\,412 & 8\,370\,005 & 66\,085\,534 & $+$0.00\% & $+$0.00\% \\
    & 16 & & 4\,631\,816 & 73\,124\,306 & 4\,631\,958 & 73\,126\,485 & $+$0.00\% & $+$0.00\% \\
    & 4 & \checkmark & 2\,641\,842 & 60\,608\,317 & 2\,641\,914 & 60\,608\,709 & $+$0.00\% & $+$0.00\% \\
    & 8 & \checkmark & 2\,427\,667 & 61\,835\,634 & 2\,427\,787 & 61\,836\,508 & $+$0.00\% & $+$0.00\% \\
    & 16 & \checkmark & 2\,181\,927 & 65\,990\,892 & 2\,182\,038 & 65\,992\,345 & $+$0.01\% & $+$0.00\% \\
    \bottomrule
\end{longtable}}

The linear-time dependency on $U$ in the naive computation of the meaningful left extensions can be clearly seen in \Cref{fig:scalability}.

\begin{figure}
    \centering
    \includegraphics[]{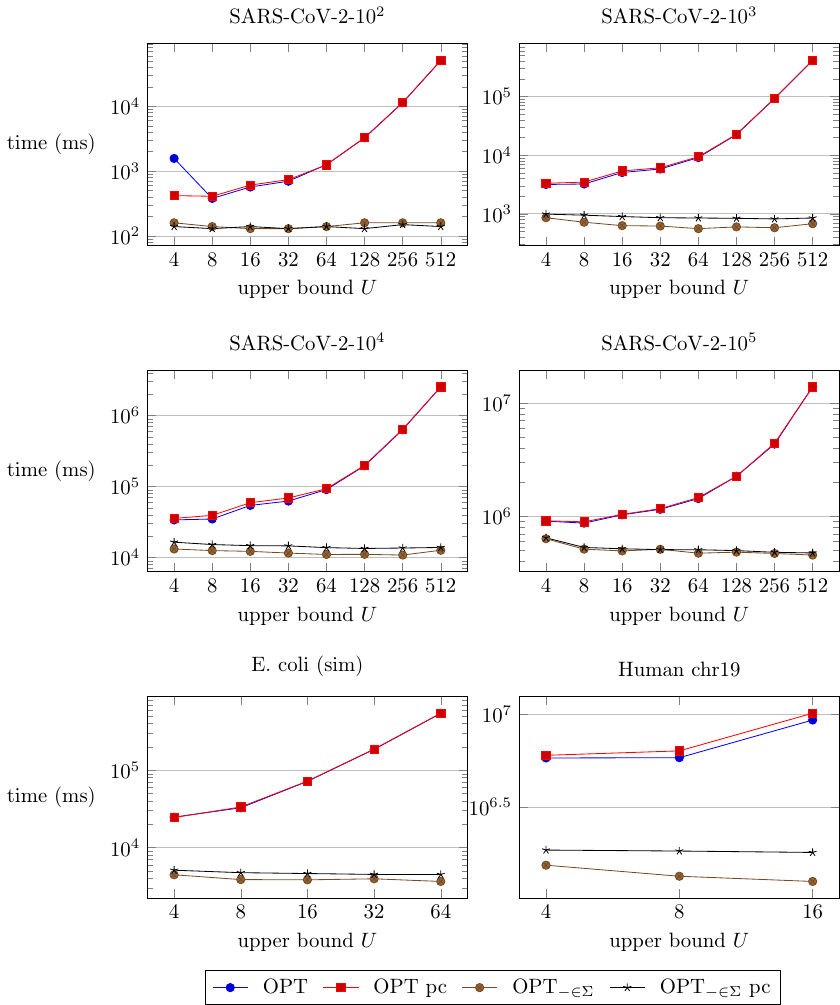}
    \caption{Log-log plot of the running times of the different \texttt{mincard} modes against the upper bound $U$ on the segment length, where $\mathrm{OPT}$ considers the standard meaningful left extensions (\Cref{def:lextextensions}), $\mathrm{OPT}_{\gap \in \Sigma}$ approximates the result with the gaps-as-symbols strategy (\Cref{obs:gaps-as-symbols}) and uses the Positional Burrows--Wheeler transform, and modifier pc considers the perfect segments of any length.} 
    \label{fig:scalability}
\end{figure}

\end{document}